\newcommand{\proglang}[1]{\textsf{#1}}
 \newcommand{\bbR}{\mathbb{R}}
 \newcommand{\bbX}{\mathbbm{X}}
 \def\argmin{\mathop{\rm arg\,min}}
 \def\span{\mathop{\rm span}}
 \def\Null{\mathop{\rm Null}}
 \def\bias{\mathop{\rm bias}}
 \def\cov{\mathop{\rm cov}}
 \def\col{\mathop{\rm col}}
 \def\var{\mathop{\rm var}}
 \def\diag{\mathop{\rm diag}}
  \def\rank{\mathop{\rm rank}}
   \def\nd{\mathop{\rm o}}
   \def\oo{\mathop{\o}}
 \newcommand{\matr}[2]{\left[ \begin{array}{#1} #2 \end{array} \right]}
\newcommand{\beno}{\begin{eqnarray}}
\newcommand{\eno}{\end{eqnarray}}
\newcommand{\benone}{\begin{eqnarray*}}
\newcommand{\enone}{\end{eqnarray*}}
\newtheorem{theorem}{Theorem}[section]
\newtheorem{proposition}[theorem]{Proposition}
\newtheorem{corollary}[theorem]{Corollary}
\newtheorem{remark}[theorem]{Remark}
\begin{document}

 \setlength{\baselineskip}{16pt}

\title{Structured penalties for functional linear models---partially empirical eigenvectors for regression}

\author{
Timothy W.~Randolph$^{*,\dag}$, Jaroslaw Harezlak$^\#$ and Ziding
Feng$^*$
\\ \bigskip\\
$^*$Fred Hutchinson Cancer Research Center\\
 Biostatistics and Biomathematics\\
 Seattle, WA 98109\\  \\
$^\#$Indiana University School of Medicine \\
Division of Biostatistics\\
Indianapolis, IN 46202
\\ \\
 $^\dag$\texttt{trandolp@fhcrc.org}}

\maketitle

\begin{abstract}
One of the challenges with functional data is incorporating geometric structure, or local
correlation, into the analysis. This structure is inherent in the output from an
increasing number of biomedical technologies, and a functional linear model is often used
to estimate the relationship between the predictor functions and scalar responses. Common
approaches to the problem of estimating a coefficient function typically involve two
stages: regularization and estimation.  Regularization is usually done via dimension
reduction, projecting onto a predefined span of basis functions or a reduced set of
eigenvectors (principal components). In contrast, we present a unified approach that
directly incorporates geometric structure into the estimation process by exploiting the
{\em joint} eigenproperties of the predictors and a linear penalty operator.  In this
sense, the components in the regression are `partially empirical' and the framework is
provided by the generalized singular value decomposition (GSVD). The form of the
penalized estimation is not new, but the GSVD clarifies the process and informs the
choice of penalty by making explicit the joint influence of the penalty and predictors on
the bias, variance and performance of the estimated coefficient function. Laboratory
spectroscopy data and simulations are used to illustrate the concepts.
\end{abstract}

\noindent{\bf Keywords:} functional data analysis, penalized
regression, generalized singular value decomposition, regularization.

\section{Introduction}

The coefficient function, $\beta$, in a functional linear model (fLM) represents the
linear relationship between responses, $y$, and predictors, $x$, either of which may
appear as a function.   
We consider the special case of scalar-on-function regression, formally written as
$y=\int_I x(t)\beta(t)\, dt+\epsilon$, where $x$ is a random function, square integrable
on a closed interval $I\subset \bbR$, and $\epsilon$ a vector of random i.i.d.~mean-zero
errors. In many instances, one has an approximate idea about the {\em informative
structure} of the predictors, such as the extent to which they are smooth, oscillatory,
peaked, etc. Here we focus on analytical framework for incorporating such information
into the estimation of $\beta$.

The analysis of data in this context involves a set of $n$ responses $\{y_i\}_{i=1}^n$
corresponding to a set of predictor curves $\{x_i\}_{i=1}^n$, each arising as a
discretized sampling of an idealized function; i.e., $x_i \equiv
(x_i(t_1),...,x_i(t_p))$, for some, $t_1<\cdots<t_p$, of $I$.  In particular, the concept
of geometric or spatial structure implies an order relation among the index parameter
values.  We assume the predictor functions have been sampled equally and densely enough
to capture geometric structure of the type typically attributed to functions in
(subspaces of) $L^2(I)$.  For this, it will be assumed that $p>n$ although this condition
is not necessary for our discussion.

Several methods for estimating $\beta$ are based on the eigenfunctions associated with
the auto-covariance operator defined by the predictors \cite{HallHor:07,Muller:05}. These
eigenfunctions provide an empirical basis for representing the estimate and are the basis
for the usual ordinary least-squares and principal-component estimates in multivariate
analysis.  The book by  Ramsay and Silverman \cite{RamSil:05} summarize a variety of
estimation methods that involve some combination of the empirical eigenfunctions and
smoothing, using B-splines or other technique, but none of these methods provide an
analytically tractable way to incorporate presumed structure directly into the estimation
process. The approach presented here achieves this by way of a penalty operator,
$\mathcal{L}$, defined on the space of predictor functions.

The joint influence of the penalty and predictors on the estimated coefficient function
is made explicit by way of the generalized singular value decomposition (GSVD) for a
matrix pair. Just as the ordinary SVD provides the ingredients for an ordinary least
squares estimate (in terms of the empirical basis), the GSVD provides a natural way to
express a penalized least-squares estimate in terms of a basis derived from both the
penalty and the predictors.
We describe this in terms of the $n\times p$ matrix of sampled predictors, $X$, and an
$m\times p$ discretized penalty operator, $L$. The general formulation is familiar as we
consider estimates of $\beta$ that arise from a squared-error loss with quadratic
penalty:
\begin{equation}\label{eq:PenEst}
\tilde\beta_{\alpha,L}=\argmin_\beta\{||y-X\beta||_{\bbR^n}^2 + \alpha||L\beta||_{L^2}^2\}.
\end{equation}
What distinguishes our presentation from others using this formulation is an emphasis on
the {\em joint} spectral properties of the pair $(X,L)$, as arise from the GSVD.  We
investigate the analytical role played by $L$ in imposing structure on the estimate and
focus on how the structure of $L$'s least-dominant singular vectors should be
commensurate with the informative structure of $\beta$.

In a Bayesian view, one may think of $L$ as implementing a prior that favors a
coefficient function lying near a particular subspace; this subspace is determined
jointly by $X$ and $L$. We note, however, that informative priors must come from
somewhere and while they may come from expectations regarding smoothness, other
information often exists---including pilot data, scientific knowledge or laboratory and
instrumental properties. Our presentation aims to elucidate the role of $L$ in providing
a flexible means of implementing informative priors, regardless of their origin.

The general concept of incorporating ``structural information" into regularized
estimation for functional and image data is well established
\cite{BertBocc:98,EnglHankNeub:00,Phillips:62}. Methods for penalized regression have
adopted this by constraining high-dimensional problems in various ``structured" ways
(sometimes with use of an $L^1$ norm): locally-constant structure
\cite{TibFusedLasso:05,Slawski:10}, spatial smoothness \cite{HasBujTib:95},
correlation-based constraints \cite{Tutz:09}, and network-dependence structure described
via a graph \cite{LiLi:08}.  These general penalties have been motivated by a variety of
heuristics: Huang et al.~\cite{Huang:08} refer to the second-difference penalty as an
``intuitive choice"; Hastie et al.~\cite{HasBujTib:95} refer to a ``structured penalty
matrix [which] imposes smoothness with regard to an underlying space, time or frequency
domain"; Tibshirani and Taylor \cite{RTibshTaylor:11} note that the rows of $L$ should
``reflect some believed structure or geometry in the signal"; and the penalties of
Slawski et al.~\cite{Slawski:10} aim to capture ``a priori association structure of the
features in more generality than the fused lasso."

The most common penalty is a (discretized) derivative operator, motivated by the
heuristic of penalizing roughness (see \cite{HasMal:93,RamSil:05}).  Our perspective on
this is more analytical: since the eigenfunctions of the second-derivative operator
$\mathcal{L}=\mathcal{D}^2$ (with zero boundary conditions on $[0,1]$) are of the form
$\varphi(t)=\sin(k\pi t)$, with eigenvalues $k^2\pi^2$ ($k=1,2,..$), $\mathcal{L}$
implements the assumption that the coefficient function is well represented by
low-frequency trigonometric functions. This is in contrast to ridge regression ($L=I$)
which imposes no geometric structure. Although not typically viewed this way, the choice
of $\mathcal{L}=\mathcal{D}^2$, or any differential operator, implies a favored basis for
expansion of the estimate.


A purely empirical basis comprised of a few dominant right singular vectors of $X$ is a
common and theoretically tractable choice. This is the essence of principal component
regression (PCR) and these vectors also form the basis for a ridge estimate. Although
this empirical basis does not technically impose local spatial structure (no order
relation among the index parameter values is used), it may be justified by arguing that a
few principal component vectors capture the ``greatest part" of a set of predictors
\cite{HallPosPre:01}. Properties of this approach for signal regression is the focus of
\cite{CaiHall:06} and \cite{HallHor:07}. The functional data analysis framework of Ramsay
and Silverman \cite{RamSil:05} provides two formulations of PCR. One in which the
predictor curves are themselves smoothed prior to construction of principal components
(chap.~8) and another that incorporates a roughness penalty into the construction of
principal components (chap.~9), as originally proposed in \cite{Silverman:96}.  In a
related presentation on signal regression, Marx and Eilers \cite{MarxEilers:99} proposed
a penalized B-spline approach in which predictors are transformed using a basis external
to the problem (B-splines) and the estimated coefficient function is derived using the
transform coefficients.  Combining ideas from \cite{MarxEilers:99} and \cite{HasMal:93},
the ``smooth principal components" method of \cite{CarFerSar:03} projects predictors onto
the dominant eigenfunctions to obtain an estimate then uses B-splines in a procedure that
smooths the estimate. Reiss and Ogden \cite{ReissOgden:07} provide a thorough study on
several of these methods and propose modifications that include two versions of PCR using
B-splines and second-derivative penalties: FPCR$_{C}$ applies the penalty to the
construction of the principal components (cf.~\cite{Silverman:96}), while FPCR$_{R}$
incorporates the penalty into the regression (cf.~\cite{RamSil:05}).

In the context of nonparametric regression ($X=I$) the formulation \eqref{eq:PenEst}
plays a dominant role for smoothing \cite{WahbaBook:90}. Related to this, Heckman and
Ramsay \cite{HeckRams:00} proposed a differential equations model-based estimate of a
function $\mu$ whose properties are determined by a linear differential operator chosen
from a parameterized family of differential equations, $L\mu=0$. In this  context,
however, the GSVD is irrelevant since $X$ does not appear and the role of $L$ is
relatively transparent.

Algebraic details on the GSVD as it relates to penalized least-squares are given in
section~\ref{sec:PenRegGSVD} with analytic expressions for various properties of the
estimation process are described in section~\ref{sec:BiasVar}. Intuitively, smaller bias
is obtained by an informed choice of $L$ (the goal being small $L\beta$). The affect of
such a choice on the variance is described analytically.
Section~\ref{sec:StructuredPenSec} describes several classes of structured penalties
including two previously-proposed special cases that were justified by numerical
simulations. The targeted penalties of subsection \ref{sec:TargetedPen} are studied in
more detail in section~\ref{sec:AnalProps} including an analysis of the mean squared
error for a family of penalized estimates which encompasses the ridge,
principal-component and James-Stein estimates.

The assumptions on $L$ here are increasingly restrictive to the point where the estimates
are only minor extensions of these well-studied estimates. The goal, however, is to
analytically describe the substantial gains achievable by even mild extensions of these
established methods.

In applications the selection of the tuning parameter, $\alpha$ in \eqref{eq:PenEst}, is
important and so Section~\ref{sec:TuningParam} describes our application of REML-based
estimation for this. Numerical illustrations are provided in section~\ref{sec:NumProps}:
the simulation in subsection~\ref{sec:BumpsSim} is motivated by Reiss and Ogden's study
of fLMs \cite{ReissOgden:07}; \ref{sec:RamanSim} presents a simulation using
experimentally-derived Raman spectroscopy curves in which the ``true" $\beta$ has
naturally-occurring (laboratory) structure; and section~\ref{sec:RamanApplic} presents an
application based on experimentally collected spectroscopy curves representing varied
biochemical (nanoparticle) concentrations.  An appendix  looks at the simulation studied
by Hall and Horowitz \cite{HallHor:07}.  We begin in section~\ref{sec:BkgdNotn} with a
brief setup for notation and an introductory example. Note that for any $L\ne I$, the
estimated $\beta$ is not given in terms of the ordinary empirical singular vectors (of
$X$), but rather in terms of a ``partially empirical" basis arising from a simultaneous
diagonalization of $X'X$ and $L'L$ via the GSVD. Hence, for brevity, we refer to
$\tilde\beta_{\alpha,L}$ as a PEER (partially empirical eigenvector for regression)
estimate whenever $L\ne I$.


\section{Background and simple example}\label{sec:BkgdNotn}

Let $\beta$ represent a linear functional on $L^2(I)$ defining a linear relationship $y=
\int_I x(t)\beta(t)\, dt +\epsilon$ (observed with error, $\epsilon$) between a response,
$y$, and random predictor function, $x\in L^2(I)$. We assume a set of $n$ scalar
responses $\{y_i\}_{i=1}^n$ corresponding to the set of $n$ predictors,
$\{x_i\}_{i=1}^n$, each discretely sampled at common locations in $I$.  Denote by $X$ the
$n\times p$ matrix whose $i$th row is a $p$-dimensional vector, $x_i$, of discretely
sampled functions, and columns that are centered to have mean $0$.  The notation $\langle
\cdot, \cdot \rangle $ will be used to denote the inner product on either $L^2(I)$ or
$\bbR^p$, depending on the context.

The empirical covariance operator is ${K}=\frac{1}{n}X'X$, but for functional predictors,
typically $p>n$ or else $K$ is ill-conditioned or rank deficient.  In this case, there
are either infinitely many least-squares solutions, $\hat\beta\equiv
\argmin_{\beta}||y-X\beta||^2$,  or else any such solution is highly unstable and of
little use.  The least-squares solution having minimum norm is unique, however, and it
can be obtained directly by the singular value decomposition (SVD): $X =UDV'$ where the
left and right singular vectors, $u_k$ and $v_k$, are the columns of $U$ and $V$,
respectively, and $D=[D_1 \ 0]$, where $D_1=\mbox{diag}\{\sigma_k\}_{k=1}^n$, typically
ordered as $\sigma_1 \ge \sigma_2\ge ...\ge \sigma_r>0$ ($r=\mbox{rank(X)}$).
In terms of the SVD of $X$, the minimum-norm solution is $\hat\beta_+ =  X^\dag y =
\sum_{\sigma_k\ne 0} ({1}/{\sigma_k}) u_k'y\,v_k$, where $X^\dag$ denotes the
Moore-Penrose inverse of $X$: $X^\dag=VD^\dag U'$, where $D^\dag=\mbox{diag}\{1/\sigma_k
\mbox{ if } \sigma_k\ne 0 ; \ 0 \mbox{ if }\sigma_k= 0\}$.
The orthogonal vectors that form the columns of $V$ are the eigenvectors of $X'X$ and are
sometimes referred to as a Karhunen-Lo\`eve (K-L) basis for the row space of $X$.

The solution $\hat\beta_+$ is Marquardt's generalized inverse estimator whose properties
are discussed in \cite{Marq:70}.  For functional data, $\hat\beta_+$ is an unstable,
meaningless solution. One obvious fix is to truncate the sum to $d<r$ terms so that
$\{\sigma_k\}_{k=1}^{d}$ is bounded away from zero. This leads to the truncated singular
value or principal component regression (PCR) estimate: $\tilde\beta_{\mbox{\tiny
PCR}}\equiv\tilde\beta^{d}_{\mbox{\tiny PCR}}=V_{d} D_{d}^{-1}{U_{d}}'y \, $ where here,
and subsequently, we use the notation $A_{d}\equiv\col[a_1,...,a_{d}]$ to denote the
first $d$ columns of a matrix $A$.

When $L=I$, the minimizer in \eqref{eq:PenEst} is the ridge penalty due to A.~E.~Hoerl
\cite{HoerlKen:70}
\begin{equation}\label{eq:betaRidge}
\tilde\beta_{\alpha,I} = (X'X+\alpha I)^{-1}X'y =
\sum_{k=1}^r \left(\frac{\sigma_k^2}{\sigma_k^2+\alpha}\right)\,\frac{1}{\sigma_k} u_k'y\,v_k,
\end{equation}
or, $\tilde\beta_{\alpha,I}=V_rF^{\alpha}D_r^\dag {U_r}'y$, where
$F^\alpha=\diag\{\frac{\sigma_k^2}{\sigma_k^2+\alpha}\}$.  The factor $F^\alpha$  acts to
counterweight, rather than truncate, the terms $\frac{1}{\sigma_k}$ as they get large.
This is one of many possible filter factors which address problems of ill-determined rank
(for more, see \cite{EnglHankNeub:00,HansenBook:98,Neum:98}). Weighted (or generalized)
ridge regression replaces $L=I$ with a diagonal matrix whose entries downweight those
terms corresponding to the most variation \cite{HoerlKen:70}.  Other ``generalized ridge"
estimates replace $L=I$ by a discretized second-derivative operator, $L=\mathcal{D}^2$.
Indeed, the Tikhonov-Phillips form of regularization \eqref{eq:PenEst} has a long history
in the context of differential equations \cite{Tikhonov:43,Phillips:62} and image
analysis \cite{Groetsch:84,Neum:98} with emphasis on numerical stability.  In a linear
model context, the smoothing imposed by this penalty was mentioned by Hastie and Mallows
\cite{HasMal:93}, discussed in Ramsay and Silverman \cite{RamSil:05} and used (on a the
space of spline-transform coefficients) by Marx and Eilers \cite{MarxEilersCalib:02},
among others.
The following simple example illustrates basic behavior for some of these penalties
alongside an idealized PEER penalty.

\subsection{A simple example} \label{sec:BumpsIntro}
We consider a set of $n=50$ bumpy predictor curves \{$x_i$\} discretely sampled at
$p=250$ locations, as displayed in gray in the last panel of Figure~\ref{Fig:PartSums}.
The true coefficient function, $\beta$, is displayed in black in this same panel. The
responses are defined as $y_i =\langle x_i, \beta \rangle + \epsilon_i$  ($\epsilon_i$
normal, uncorrelated mean-zero errors), and hence depend on the amplitudes of $\beta$'s
three bumps centered at locations $t=45,110,210$.

A detailed simulation with complete results are provided in section~\ref{sec:BumpsSim}.
Here we simply illustrate the estimation process for $L=I$, as in \eqref{eq:betaRidge},
in comparison with $L=\mathcal{D}^2$ and an idealized PEER penalty.  The latter is
constructed using a visual inspection of the predictors and lightly penalizes the
subspace spanned by such structure, specifically, bumps centered at all visible locations
(approximately $t=15, 45, 80, 110, 160, 210, 240$).

The first five panels serve to emphasize the role played by the structure of basis
vectors that comprise the series expansion in \eqref{eq:betaRidge} (in terms of ordinary
singular vectors) versus the analogous expansion (see \eqref{eq:betaGSVE}) in terms of
generalized singular vectors.  In particular, Figure \ref{Fig:PartSums} shows several
partial sums of \eqref{eq:betaGSVE} for these three penalties. The ridge process (gray)
is, naturally, dominated by the right singular vectors of $X$ which become increasingly
noisy in successive partial sums. The second-derivative penalized estimate (dashed) is
dominated by low-frequency structure, while the targeted PEER estimate converges quickly
to the informative features.

\begin{figure}
\centering
{\includegraphics[height=6cm,width=11cm]{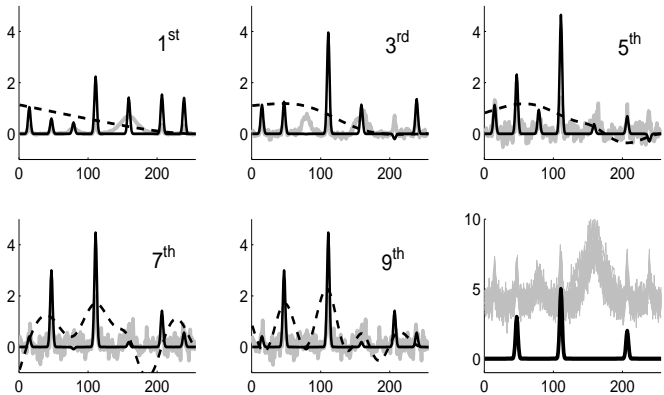}}
\vskip-10pt\caption{\small {\bf Partial sums of penalized estimates.}
The first five odd-numbered partial sums from \eqref{eq:betaGSVE} for three penalties:
2nd-derivative (dashed), ridge (gray), targeted PEER (black; see text in
sections~\ref{sec:BumpsIntro} and \ref{sec:BumpsSim}).
The last panel displays $\beta$ (black) and 15 predictors, $x_i$ (gray),
from the simulation.}
 \label{Fig:PartSums}
\end{figure}

In this toy example, visual structure (spatial location) is used to define a
regularization process that easily outperforms uninformed methods of penalization. Less
visual examples where the penalty is defined by a set of laboratory-derived structure (in
Raman spectroscopy curves) is given in sections~\ref{sec:RamanSim} and
\ref{sec:RamanApplic}; see Figure~\ref{Fig:COINs}. In that setting, and in general, the
role played by $L$ is appropriately viewed in terms of a preferred subspace in $\bbR^p$
determined by its singular vectors. Algebraic details about how structure in the
estimation process is determined jointly by $X$ and $L\ne I$  are described next.

\section{Penalized least squares and the GSVD}\label{sec:PenRegGSVD}

Of the many methods for estimating a coefficient function discussed in the Introduction,
nearly are all aimed at imposing geometric or ``functional" structure into the process
via the use of basis functions in some manner.  An alternative to choosing a basis
outright is to exploit the structure imposed by an informed choice of penalty operator.
The basis, determined by a pair $(X,L)$, can be tailored toward structure of interest by
the choice of $L$.  When this is carried out in the least-squares setting of
\eqref{eq:PenEst}, the algebraic properties of the GSVD explicitly reveal how the
structure of the estimate is inherited from the spectral properties of $(X,L)$.

\subsection{The GSVD}
For a given linear penalty $L$ and parameter $\alpha>0$, the estimate in
\eqref{eq:PenEst} takes the form
\begin{equation}\label{eq:beta_NormalEqns}
\tilde{\beta}_{\alpha,L} = (X'X +\alpha L'L)^{-1}X'y.
\end{equation}
This cannot be expressed using the singular vectors of $X$ alone, but the generalized
singular value decomposition of the pair $(X,L)$ provides a tractable and interpretable
series expansion. The GSVD appears in the literature in a variety of forms and notational
conventions.  Here we provide the necessary notation and properties of the GSVD for our
purposes (see, e.g., \cite{HansenBook:98}) but refer to
\cite{Bjorck:96,GolubVanLoan:96,PaigSaun:81} for a complete discussion and details about
its computation.  See also the comments of Bingham and Larntz \cite{BingLarntz:77}.

Assume $X$ is an $n\times p$ matrix ($n\le p$) of rank $n$, $L$ is an $m\times p$ matrix
($m\le p$) of rank $m$. We also assume that $n \le m \le p \le m+n$, and the rank of the
$(n+m)\times p$ matrix $Z:=[X'\ L']'$ is $p$. A unique solution is guaranteed if the null
spaces of $X$ and $L$ intersect trivially: $\Null(L)\cap\Null(X)=\{0\}$. This is not
necessary for implementation, but it is natural in our applications and simplifies the
notation. In addition, the condition $p>n$ is not required, but rather than present
notation for multiple cases, this will be assumed.

Given $X$ and $L$, the following matrices exist and form the decomposition below: an
$n\times n$ matrix $U$ and an $m\times m$ matrix $V$, each with orthonormal columns,
$U'U=I$, $V'V=I$;  diagonal matrices $S$ ($n\times n$) and $M$ ($m\times m$); and a
nonsingular $p\times p$ matrix $W$ such that
\begin{equation}\label{eq:GSVD}
\begin{aligned}
X & =U\underline{S}W^{-1}\, , \qquad \ \underline{S}=\matr{cc}{0 & S},\qquad S=\text{diag}\{\sigma_k\}\\
L & =V\underline{M}W^{-1}\,  \qquad \underline{M}=\matr{cc}{ M &  0}, \qquad M=\text{diag}\{\mu_k\}.
\end{aligned}
\end{equation}
Here,  $S$ and $M$ are of the form    $S =\matr{cc}{S_1 & 0\\ 0 & I_{p-m} }$ and
$M=\matr{cc}{I_{p-n} & 0 \\ 0 & M_1 }$, whose submatrices $S_1$ and $M_1$ have
 $q:= n+m-p$ diagonal entries ordered as
\begin{equation}\label{eq:GS_values}
\begin{aligned}
0 & \le \sigma_1 \le \sigma_2 \le \cdots \le \sigma_q\le 1 \\
1 & \ge \mu_1 \ge \mu_2 \ge \cdots \ge \mu_q\ge  0
\end{aligned}
\qquad \mbox{where} \qquad \sigma_k^2+\mu_k^2=1, \quad k=1,...,q,
\end{equation}
These matrices satisfy
\begin{equation}\label{eq:W'X'XW}
W'X'XW =  \matr{ccc}{0 & 0 & 0 \\ 0 & S_1^2 & 0 \\ 0 & 0 & I} = \underline{S}'\underline{S}, \qquad
W'L'LW = \matr{ccc}{I & 0 & 0 \\ 0 & M_1 ^2 & 0\\  0 & 0 & 0} = \underline{M}'\underline{M},
\end{equation}
with $\underline{S}'\underline{S}+\underline{M}'\underline{M}=I$.

Denote the columns of $U$, $V$ and $W$ by $u_k$, $v_k$ and $w_k$, respectively. In spite
of the notation, the generalized singular vectors $u_k$ and $v_k$ are not the same as the
ordinary singular vectors of $X$ in Section~\ref{sec:BkgdNotn}.  They are the same when
$L=I$, although their ordering is reversed; in that case, the ordinary singular values
correspond to $\gamma_k:=\sigma_k/\mu_k$ for $\mu_k>0$. By the convention used for
ordering the GS values and vectors, the last few columns of $W$ span the subspace
$\Null(L)$ (or, if $\Null(L)$ is empty, they correspond to the smallest GS values,
$\mu_k$). We set $d=\dim(\Null(L))$ and note that $\mu_k=0$ for $k>n-d$.

Now, equation \eqref{eq:W'X'XW} and some algebra gives $(X'X+\alpha L'L)^{-1}
 = W (\underline{S}'\underline{S} + \alpha\underline{M}'\underline{M})^{-1} W'$, and so
 $\tilde\beta_{\alpha,L}
 = W(\underline{S}'\underline{S}+\alpha \underline{M}'\underline{M})^{-1}\underline{S}'U'y$.
A consequence of the ordering adopted for the GS values and vectors is that the first
$p-n$ columns of $W$ play no role in this solution; see equation~\eqref{eq:GSVD}.  So we
can replace $W$ by the $p\times n$ matrix, $W_n$, consisting of the last $n$ columns of
$W$ (corresponding to the indices $p-n+1$ to $p$).  We index the columns of $W_n$ as
$w_1$,...,$w_n$ which is consistent with the indexing established in \eqref{eq:GS_values}
for the singular values in $S$ and $M$. Therefore, the $L$-penalized estimate can be
expressed as a series in terms of GS values as
\begin{equation}\label{eq:betaGSVE}
\tilde\beta_{\alpha,L}
 = \sum_{k=1}^{n-d} \left(\frac{\sigma_k^2}{\sigma_k^2+\alpha \mu_k^2}\right) \frac{1}\sigma_k u_k'y \, w_k
 \, + \sum_{k=n-d+1}^n  u_k'y \, w_k.
\end{equation}
This GSV expansion  corresponds to a new basis for the estimation process: the estimate
is expressed in terms of GS vectors $\{w_k\}$ determined jointly by $X$ and $L$; cf.~the
ridge estimate in~\eqref{eq:betaRidge}.

For more compact notation used later, define the $(n-d)\times (n-d)$ diagonal matrix
$\Gamma =\diag\{\gamma_k = \sigma_k/\mu_k\}_{k=1}^{n-d}$.  For brevity, set $\nd:=n-d$
and let $A_{\nd}$ denote the first $\nd$ columns of a matrix $A$. Also, denote by
$A_{\oo}$ the {\em last} $d$ columns of $A$. In particular, the range of $W_{\oo}$ is
$\Null(L)$. Using this notation, \eqref{eq:betaGSVE} may be written concisely as
\begin{equation}\label{eq:beta_aLQ}
 \tilde{\beta}_{\alpha,L}=  W_{n \nd} F^{\alpha}\Gamma^\dag {U_{\nd}}'y + W_{\oo}{U_{\oo}}'y,
\end{equation}
where $F^\alpha=\diag\left\{\frac{\sigma_k^2}{\sigma_k^2+\alpha\mu_k^2}\right\}_{k =
1}^{n-d}$.

In summary, the utility of a penalty $L$ depends on whether the true coefficient function
shares structural properties with this GSVD basis, $\{w_k\}_{k=1}^n$. With regard to
this, the importance of the parameter $\alpha$ may be reduced by a judicious choice of
$L$ since the terms in \eqref{eq:betaGSVE} corresponding to the vectors $\{w_k:\mu_k=0\}$
are independent of the parameter $\alpha$ \cite{Varah:79}.

As we'll see, bias enters the estimate to the extent that the vectors $\{w_k:\mu_k\ne
0\}$ appear in the expansion \eqref{eq:betaGSVE}. The portion of
$\tilde{\beta}_{\alpha,L}$ that extends beyond the subspace $\Null(L)$ is constrained by
a sphere (of radius determined by $\alpha$);  this portion corresponds to bias. Hence,
$L$ may be chosen in such a way that the bias and variance of $\tilde\beta_{\alpha,L}$
arises from a specific type of structure, potentially decreasing bias without increasing
complexity of the model.  As a common example, $L=\mathcal{D}^2$ introduces smooth bias
structured by low-frequency trigonometric functions.



\subsection{Bias and variance and the choice of penalty operator} \label{sec:BiasVar}
Begin by observing that the penalized estimate  $\tilde\beta_{\alpha,L}$ in
\eqref{eq:beta_NormalEqns} is a linear transformation of any solution to the
normal equations. Indeed, define
 $ X^\# \equiv X^\#_{\alpha,L} = (X'X +\alpha L'L)^{-1}X'$
and note that if $\hat\beta$ denotes any solution to $X'X\beta=X'y$, then
$\tilde\beta_{\alpha,L} = X^\#X\hat\beta + X^\#\epsilon$.  The {\em resolution} operator
$X^\#X$ reflects the extent to which the estimate in \eqref{eq:betaGSVE} is linearly
transformed relative to an exact solution.  In particular, $E(\tilde{\beta}_{\alpha,L})=
X^\#X\hat\beta$. Additionally, we have $\mbox{bias}(\tilde{\beta}_{\alpha,L})=(I-X^\#
X)\beta=\alpha(X'X +\alpha L'L)^{-1}L'L\beta$, and so
$||\mbox{bias}(\tilde{\beta}_{\alpha,L})|| \le ||\alpha(X'X +\alpha L'L)^{-1}L'||\,
||L\beta||$. Hence bias can be controlled by the choice of $L$, with an estimate being
unbiased whenever $L\beta=0$.  There is a tradeoff, of course, and
equation~\eqref{eq:var} below quantifies the effect on the variance as determined by
$W_{\oo}$ (i.e., $\{w_k\}_{k=n-d+1}^n$) if $\Null(L)$ is chosen to be too large.

More generally, the decompositions in~\eqref{eq:GSVD} lead to an expression for the
resolution matrix as  $X^{\#}X = W(\underline{S}'\underline{S} + \alpha
\underline{M}'\underline{M})^{-1}\underline{S}'\underline{S}W^{-1}$, and so $I - X^{\#}X
= \alpha W(\underline{S}'\underline{S} + \alpha
\underline{M}'\underline{M})^{-1}\underline{M}'\underline{M}W^{-1}$. Again, from
equation~\eqref{eq:GSVD}, the first $p-n$ rows of $W^{-1}$ are not used.  For notational
convenience, define $\tilde{W}:=W'^{-1}$  (note, $\tilde{W}$ plays a role analogous to
$V\equiv V'^{-1}$ in the SVD). As before, let $\tilde{W}_n$ denote the $p\times n$ matrix
consisting of the last $n$ columns of $\tilde{W}$ and note that in
equation~\eqref{eq:GSVD}, $X=U\underline{S}W^{-1} = US(\tilde{W}_n)'$.  Hence, $I-X^{\#}X
= \alpha W_n(S^2 + \alpha M^2)^{-1}M^2(\tilde{W}_n)'$, and so the bias of
$\tilde{\beta}_{\alpha,L}$ can be expressed as
\begin{equation}\label{eq:bias}
\mbox{bias}(\tilde{\beta}_{\alpha,L})
   = (I-X^\# X)\beta
  = \sum_{k=1}^{n-d} \frac{\alpha \mu_k^2}{\sigma_k^2+\alpha\mu_k^2}w_k \tilde{w_k}'\beta
\end{equation}
where $\tilde{w}_k$ is the $k$th column of  $\tilde{W}$.  In particular, the bias vector
\eqref{eq:bias} is contained in $\span\{w_k:\mu_k\ne0\}$, whereas the estimate
$\tilde{\beta}_{\alpha,L}$ is in $\span\{w_k:\sigma_k\ne0\}$. In the special case that
$X'X$ is invertible, then $\beta=WS^{-1} U'y$ and
$\mbox{bias}(\tilde{\beta}_{\alpha,L})=\sum_{k} \left(\frac{\alpha
\mu_k^2}{\sigma_k^2+\alpha\mu_k^2}\right)\frac{1}{\sigma_k}u_k'y\, w_k$ (see
\cite{MacLeod:88}).

A counterpart is an expression for the variance in terms of the GSVD. Let $\Sigma$ denote
the covariance for $\epsilon$. Then
$\var(\tilde\beta_{\alpha,L})  
= \var(X^{\#}X\beta + X^{\#}\epsilon) = X^{\#}\Sigma(X^{\#})'$.  Assuming
$\Sigma=\sigma_{\epsilon} I$, this simplifies to
\begin{equation}\label{eq:var}
\var(\tilde\beta_{\alpha,L}) = \sigma_{\epsilon}^2 X^{\#}(X^{\#})'
 = \sigma_{\epsilon}^2 \left( \sum_{k=1}^{n-d} \frac{\sigma_k^2}{(\sigma_k^2+\alpha\mu_k^2)^2}w_kw_k'
 + \sum_{k=n-d+1}^{n}w_kw_k'\right).
\end{equation}

An interesting perspective of the bias-variance tradeoff
is provided by the relationship between the GS-values in \eqref{eq:GS_values}
and their role in equations \eqref{eq:bias} and \eqref{eq:var}. Moreover, these
lead to an explicit expression for the mean squared error (MSE) of a PEER
estimate. Since $E(\tilde\beta_{\alpha,L})=X^\#X\beta$,
\begin{equation}\label{eq:MSE_beta}
\begin{aligned}
\mbox{MSE}(\tilde\beta_{\alpha,L}) & = E(||\beta-\tilde\beta_{\alpha,L}||^2)
    = E(||\beta||^2 + ||\tilde\beta_{\alpha,L}||^2 - 2\langle\beta, \tilde\beta_{\alpha,L} \rangle) \\
   & = ||\beta-X^\#X\beta||^2 + \sigma_\epsilon^2\,\mbox{trace}(X^\#{X^\#}')  \\
   & = ||\bias(\tilde\beta_{\alpha,L})||^2 +
       \sigma_{\epsilon}^2 \sum_{k=1}^{n-d} \frac{\sigma_k^2}{(\sigma_k^2+\alpha\mu_k^2)^2}||w_k||^2.
\end{aligned}
\end{equation}
The GS-vectors $\{w_k\}$ are not necessarily orthogonal, although they are
$X'X$-orthogonal; see \eqref{eq:W'X'XW}. Consequently, a bound, rather than equality, for
the MSE in terms of the GS values/vectors is the best one can do in general:
\begin{equation}
 \mbox{MSE}(\tilde\beta_{\alpha,L})  \le
  \left( \sum_{k=1}^{n-d} \frac{\alpha\mu_k^2}{\sigma_k^2+\alpha\mu_k^2} \tilde{w_k}'\beta\, ||w_k||\right)^2
   + (\mbox{the second term in \eqref{eq:MSE_beta}}).
\end{equation}

As a final remark, recall that one perspective on ridge estimation defines fictitious
data from an orthogonal ``experiment," represented by an $L$, and expresses $I$ as
$I=L'L$ \cite{Marq:70}.  Regardless of orthogonality this applies to any penalized
estimate and $L$ may similarly be viewed as augmenting the data, influencing the
estimation process through its eigenstructure; the response, $y$, is set to zero for
these supplementary ``data".  In this view, equation~\eqref{eq:beta_NormalEqns} can be
written as $Z\beta=
\underline{y}$ where $Z=\matr{c}{X\\ \sqrt{\alpha} L}$ and $\underline{y}=\matr{c}{y\\
0}$. This formulation proves useful in section~\ref{sec:MSEcompare} when assuring that
the estimation process is stable with respect to perturbations in $X$ and the choice of
$L$.

\section{Structured penalties}\label{sec:StructuredPenSec}

A {\em structured penalty} refers to a second term in \eqref{eq:PenEst} that
involves an operator chosen to encourage certain functional properties in the
estimate. A prototypical example is a derivative operator which imposes
smoothness via its eigenstructure.  Here we describe several examples of
structured penalties, including two that were motivated heuristically and
implemented without regard to the spectral properties that define their
performance. Sections~\ref{sec:BiasVar} and \ref{sec:MSEcompare} provide a
complete formulation of their properties as revealed by the GSVD.

\subsection{The penalty of C. Goutis}\label{sec:GoutisPen}

The concept of using a penalty operator whose eigenstructure is targeted toward
specific properties in the predictors appears implicitly in the work of C.
Goutis \cite{Goutis:98}.  This method aimed to account for the ``functional
nature of the predictors" without oversmoothing and, in essence, considered the
inverse of a smoothing penalty.  Specifically, if $\Delta$ denotes a
discretized second-derivative operator (with some specified boundary
conditions), the minimization in \eqref{eq:PenEst} was replaced by
$\min_\beta\{||Y-X\Delta'\Delta\beta||_{\bbR^n}^2 +
\alpha||\Delta\beta||_{L^2}^2\}$. Here, the term $X\Delta'\Delta\beta$ can be
viewed as the product of $X\Delta'$ (derivatives of the predictor curves) and
$\Delta\beta$ (derivative of $\beta$). Defining $\gamma:=\Delta'\Delta\beta$
and seeking a penalized estimate of $\gamma$ leads to
\begin{equation}\label{eq:GoutisPenEq}
\begin{aligned}
\tilde{\gamma} & = (X'X+\alpha(\Delta'\Delta)^\dag)^{-1}X'y \\
& = \argmin_\gamma\{||y-X\gamma||^2+\alpha \langle\gamma,(\Delta'\Delta)^\dag\gamma\rangle\}.
\end{aligned}
\end{equation}
In \cite{Goutis:98}, the properties of $\tilde{\gamma}$ were conjectured to
result from the eigenproperties of $(\Delta'\Delta)^\dag$. This was explored by
ignoring $X$ and plotting some eigenvectors of $(\Delta'\Delta)^\dag$.  The
properties of this method become transparent, however, when formulated in terms
of the GSVD.  That is, let $L:=((\Delta'\Delta)^\dag)^{1/2}$ and note the
functions that define $\hat\gamma$ are influenced most by the highly
oscillatory eigenvectors of $L$ which correspond to its smallest eigenvalues;
see  equations \eqref{eq:GS_values} and \eqref{eq:betaGSVE}.

This approach was applied in \cite{Goutis:98} only for prediction and has drawbacks in
producing an interpretable estimate, especially for non-smooth predictor curves. The
general insight is valid, however, and modifications of this penalty can be used to
produce more stable results.  The operator $(\Delta'\Delta)^\dag$  essentially reverses
the frequency properties of the eigenvectors of $\Delta$ and is an extreme alternative to
this smoothing penalty. An eigenanalysis of the pair $(X,L)$, however, suggests penalties
that may be more suited to the problem. This is illustrated in
Section~\ref{sec:NumProps}.

\subsection{Targeted penalties}\label{sec:TargetedPen}

Given some knowledge about the relevant structure, a penalty can be defined in terms of a
subspace containing this structure. For example, suppose
$\beta\in\mathcal{Q}:=\span\{q_j\}_{j=1}^d$ in $L^2(I)$.  Set $Q=\sum_{j=1}^d q_j\otimes
q_j$ and consider the orthogonal projection $P_\mathcal{Q}=QQ^\dag$. (Here, $q\otimes q$
denotes the rank-one operator $f\mapsto \langle f,q\rangle q$, for $f\in L^2(I)$.) For
$L=I-P_\mathcal{Q}$, then $\beta\in\Null(L)$ and $\tilde{\beta}_{\alpha,L}$ is unbiased.
The problem may still be underdetermined so, more pragmatically, define a
decomposition-based penalty
\begin{equation}\label{eq:LQPQ}
L \equiv L_\mathcal{Q}=a(I-P_\mathcal{Q})+b P_\mathcal{Q}
\end{equation}
for some $a,b\ge 0$. Heuristically, when $a > b > 0$ the effect is to move the estimate
towards $Q$ by preferentially penalizing components orthogonal to $\mathcal{Q}$; i.e.,
assign a prior favoring structure contained in the subspace $\mathcal{Q}$. To implement
the tradeoff between the two subspaces, we view $a$ and $b$ as inversely related, $ab=$
const. The analytical properties of estimates that arise from this are developed in the
next section and illustrated numerically in Section~\ref{sec:NumProps}. For example, bias
is substantially reduced when $\beta\subset\mathcal{Q}$, and
equation~\eqref{eq:MSE_beta_abV} quantifies the tradeoff with respect to variance when
the prior $\mathcal{Q}$ is chosen poorly.

More generally, one may penalize each subspace differently by defining $L=\alpha_1
(I-P_\mathcal{Q})\mathcal{L}_1(I-P_\mathcal{Q}) + \alpha_2 P_\mathcal{Q} \mathcal{L}_2
P_\mathcal{Q}$, for some operators $\mathcal{L}_1$ and $\mathcal{L}_2$. This idea could
be carried further: for any orthogonal decomposition of $L^2(I)$ by subspaces
$\mathcal{Q}_1,\dots,\mathcal{Q}_J$, let $P_j$ be the projection onto $\mathcal{Q}_j$.
Then the multi-space penalty $L=\sum_{j=1}^J \alpha_jP_j$ leads to the estimate
\begin{equation*}\label{eq:betahat_Wj}
\tilde\beta=\argmin_\beta \{ ||y-X\beta||^2 + \sum_{j=1}^J \alpha_j||P_j\beta||^2\}.
\end{equation*}
This concept was applied in the context of image recovery (where   $X$ represents a
linear distortion model for a degraded image $y$) by Belge et al.~\cite{Belge:00}.

The examples here illustrate ways in which assumptions about the structure of a
coefficient function can be incorporated directly into the estimation process. In
general, any estimation of $\beta$ imposes assumptions about its structure (either
implicitly or explicitly) and section~\ref{sec:BiasVar} shows that the bias-variance
tradeoff involves a choice on the {\em type} of bias (spatial structure) as well as the
{\em extent} of bias (regularization parameter(s)).

\section{Some analytical properties}\label{sec:AnalProps}

Any direct comparison between estimates using different penalty operators is confounded
by the fact there is no simple connection between the generalized singular values/vectors
and the ordinary singular values/vectors.  Therefore, we first consider the case of
targeted or projection-based penalties \eqref{eq:LQPQ}.  Within this class, we introduce
a parameterized family of estimates that are comprised of {\em ordinary} singular
values/vectors.  Since the ridge and PCR estimates are contained in (or a limit of) this
family, a comparison with some targeted PEER estimates is possible. For more general
penalized estimates, properties of perturbations provide some less precise relationships;
see proposition~\ref{prop:MSE_perturb}.

%

\subsection{Transformation to standard form}\label{sec:StdForm}

We have reason to consider decomposition-based penalties \eqref{eq:LQPQ} in which $L$ is
invertible.  In this case, an expression for the estimate does not involve the second
term in \eqref{eq:beta_aLQ}, and decomposing the first term into two parts will be
useful. For this, we find it convenient to use the standard-form transformation due to
Elden \cite{Elden:82} in which the penalty $L$ is absorbed into $X$.  This transformation
also provides a computational alternative to the GSVD which, for projection-based
penalties in particular, can be less computationally expensive; see, e.g.,
\cite{KilHanEsp:07}. By this transformation of $X$, a general PEER estimate ($L\ne I$)
can be expressed via a ridge-regression process.

Define the {\em $X$-weighted generalized inverse of $L$} and the corresponding
transformed $X$ as:
\begin{equation*}
L_X^\dag:=(I-[X(I-L^\dag L)]^\dag X)L^\dag \quad\mbox{and}\quad
\bbX:=X\,L_X^\dag;
\end{equation*}
see \cite{Elden:82,HansenBook:98}.  In terms of the GSVD components \eqref{eq:GSVD}, the
transformed $X$ is $\bbX=U\Gamma V'$. In particular, the diagonal elements of
$\Gamma=SM^\dag$ are the ordinary singular values of $\bbX$, but in reversed order.
Moreover, a PEER estimate can be obtained from a ridge-like penalization process with
respect to $\bbX$.  That is, for
\begin{equation}\label{eq:bbbeta}
\tilde{\mathbb{\bbbeta}}_{\alpha}
= \argmin_{\mathbb{\bbbeta}} \{||\mathbbm{y}-\bbX\mathbb{\bbbeta}||^2+\alpha|| \mathbb{\bbbeta}||^2 \}
\quad\mbox{where}\quad  \mathbbm{y} =[X(I-L^\dag L)]^\dag y,
\end{equation}
then
\begin{equation*}\label{eq:betaL_from_StdForm}
\tilde{\beta}_{\alpha,L} = L_X^\dag\tilde{\mathbb{\bbbeta}}_{\alpha}
+ \mathbbm{y}.
\end{equation*}
Note that the transformed estimate as given in terms of the GSVD factors is:
$\tilde{\bbbeta}_\alpha=VF\Gamma^\dag U'y$, where
$F=\diag\{\gamma_k^2/(\gamma_k^2+\alpha)\}$.

In what follows we consider invertible $L$ in which case
$L_X^\dag=L^{-1}$ and $[X(I-L^\dag L)]^\dag=0$.  In particular,
$\tilde{\beta}_{\alpha,L} = L^{-1}\tilde{\mathbb{\bbbeta}}_{\alpha}$.
For the penalty \eqref{eq:LQPQ} of the form $L=a(I-P_\mathcal{Q})+
bP_\mathcal{Q}$, then
$L^{-1}=\frac{1}{a}(I-P_\mathcal{Q})+\frac{1}{b}P_\mathcal{Q}$, and so
$\bbX=\frac{1}{a}X(I-P_\mathcal{Q})+\frac{1}{b}XP_\mathcal{Q}$.  The
regularization parameter, previously denoted by $\alpha$, can be
absorbed into the values $a$ and $b$, so we will denote this PEER
estimate $\tilde\beta_{\alpha,L}$ simply as $\tilde{\beta}_{a,b}$.

\begin{remark}\label{rmk:betaL_betaR}
When $a=b=\sqrt{\alpha}$, this is simply a ridge estimate:
$\tilde{\beta}_{a,b}=\tilde{\beta}_{\alpha,I}$.  Therefore, the best
performance among this family of estimates is as least as good as the
performance of ridge, regardless of the choice of $\mathcal{Q}$.
\end{remark}

\subsection{SVD targeted penalties}\label{sec:SVDdecompPen}

Consider the special case in which $\mathcal{Q}$ is the span of the $d$ largest right
singular vectors of an $n\times p$ matrix $X$ of rank $n$.  Let $X=U\matr{cc}{0 & D}V'$
be an ordinary singular value decomposition where $D$ is a diagonal matrix of singular
values. For consistency with the GSVD notation, these will be ordered as $0\le
\sigma_1\le\cdots\le \sigma_n$. As before, the first $p-n$ columns of $V$ are not used.
Rather than introduce extra notation, we write $X=UDV'$, letting $V$ denote the $n\times
p$ whose columns correspond to the singular vectors in $D$.  So now, the {\em last} $d$
columns of $V$  correspond to the $d$ largest singular values of $X$ (i.e., $Q=V_{\oo}$).

We are interested interested in the penalty $L=a(I-P_\mathcal{Q})+b P_\mathcal{Q}$, where
$d=\dim(\Null(I-P_\mathcal{Q}))$. Similar to before, set $\nd=n-d$ and define
$\nd\times\nd$ and $d\times d$ submatrices, $D_{\nd}$ and $D_{\oo}$, of $D$ as
\begin{equation}\label{eq:D_Lamba_decomp}
D=\matr{cc}{D_{\nd}&0\\ 0&D_{\oo}}; \quad\mbox{also set}\quad \Lambda=\matr{cc}{aI_{\nd} & 0\\ 0 & bI_d}.
\end{equation}

Here, $P_\mathcal{Q}=V_{\oo}{V_{\oo}}'$ and $(I-P_\mathcal{Q})=V_{\nd}{V_{\nd}}'$ and so,
\begin{equation*}
\begin{aligned}
\bbX & = \frac{1}{a}UDV'(V_{\nd}{V_{\nd}}') +\frac{1}{b}UDV'(V_{\oo}{V_{\oo}}')
       = \frac{1}{a}UD\matr{c}{{V_{\nd}}'\\0}+\frac{1}{b}UD\matr{c}{0\\ {V_{\oo}}'}\\
     & = U\matr{c}{\frac{1}{a}D_{\nd}{V_{\nd}}'\\0} +U\matr{c}{0 \\ \frac{1}{b}D_{\oo}{V_{\oo}}'}
       = U(D\Lambda^{-1}) V'
\end{aligned}
\end{equation*}
This decomposition implies that the ridge estimate in \eqref{eq:bbbeta} is of the
following form:  setting $G=D\Lambda^{-1}$, denoting its diagonal entries by
$\{\gamma_k\}$, and defining $F=\diag\{\gamma_k^2/(\gamma_k^2+1)\}$ gives
$\tilde\bbbeta=VFG^\dag U'y$. Now,
\begin{equation*}
L^{-1}V = \frac{1}{a}V_{\nd}{V_{\nd}}'V + \frac{1}{b}V_{\oo}{V_{\oo}}'V
        = \matr{c}{V_{\nd}\\ V_{\oo}}\matr{cc}{\frac{1}{a}I_{\nd} & 0\\ 0 & \frac{1}{b}I_d} = V\Lambda^{-1}
\end{equation*}
and so $\tilde{\beta}_{a,b}=L^{-1}\tilde\bbbeta = L^{-1}(V FG^\dag U'y) = V\Lambda^{-1} F
\Lambda D^{-1} U'y$.  By the decomposition
 \eqref{eq:D_Lamba_decomp},
 $$\tilde{\beta}_{a,b} = V_{\nd} F_{\nd} D_{\nd}^{-1} {U_{\nd}}'y + V_{\oo} F_{\oo} D_{\oo}^{-1}
 {U_{\oo}}'y.$$
This shows that the estimate decomposes as follows.
\begin{theorem}\label{thm:beta_aLQbPQ}
Let $\mathcal{Q}$ be the span of the largest $d$ right singular vectors of $X$.  Set
$L=a(I-P_\mathcal{Q})+ bP_\mathcal{Q}$.  Then, in terms of the notation above, the
estimate $\tilde{\beta}_{a,b}$ decomposes as
\begin{equation}\label{eq:beta_aLQbPQ}
 \tilde{\beta}_{a,b}
 = \sum_{k=1}^{n-d} \left(\frac{\sigma_k^2}{\sigma_k^2+a^2}\right) \frac{1}\sigma_k u_k'y \, v_k
 + \sum_{k=d+1}^n  \left(\frac{\sigma_k^2}{\sigma_k^2+b^2}\right) \frac{1}\sigma_k u_k'y \, v_k,
\end{equation}
where the left and right terms are independent of $b$ and $a$, respectively.
\end{theorem}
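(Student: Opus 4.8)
The plan is to work directly from the normal-equations form $\tilde{\beta}_{a,b}=(X'X+L'L)^{-1}X'y$, with the regularization parameter absorbed into $a$ and $b$ as in the paragraph preceding the theorem, rather than re-running the full GSVD machinery. The projection structure of $L$ makes $L'L$ collapse to a block-diagonal form, and that is what drives the whole argument. Equivalently, one can follow the standard-form route of Section~\ref{sec:StdForm}, which expresses $\tilde{\beta}_{a,b}$ through a ridge process applied to the transformed design $\bbX$; the two routes agree, but the normal-equations route sidesteps the $X$-weighted generalized inverse.

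First I would simplify $L'L$. Because $P_\mathcal{Q}$ is an orthogonal projection, $P_\mathcal{Q}^2=P_\mathcal{Q}$ and $(I-P_\mathcal{Q})P_\mathcal{Q}=0$, and since $L$ is symmetric the cross terms vanish, giving $L'L=a^2(I-P_\mathcal{Q})+b^2P_\mathcal{Q}$. Next I would insert the specific choice of $\mathcal{Q}$: with $X=UDV'$ and $\mathcal{Q}$ the span of the largest $d$ right singular vectors, we have $P_\mathcal{Q}=V_{\oo}{V_{\oo}}'$ and, on the row space of $X$, $I-P_\mathcal{Q}=V_{\nd}{V_{\nd}}'$. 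Together with $X'X=VD^2V'$ this simultaneously diagonalizes both $X'X$ and $L'L$ in the basis $\{v_k\}$: the $v_k$ with $k\le n-d$ carry the weight $a^2$ and the $v_k$ with $k>n-d$ carry $b^2$. Since $a,b>0$, the matrix $L'L$ is positive definite, so $X'X+L'L$ is invertible and block-diagonal with respect to the splitting $\col(V)\oplus\Null(X)$; I would then invert it block-wise, picking up eigenvalues $\sigma_k^2+a^2$ (for $k\le n-d$) and $\sigma_k^2+b^2$ (for $k>n-d$) on the row space.

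To finish, I would use $X'y=VDU'y=\sum_k\sigma_k\,(u_k'y)\,v_k$, which lies entirely in the row space of $X$, so that applying $(X'X+L'L)^{-1}$ produces the factor $\sigma_k/(\sigma_k^2+a^2)$ for $k\le n-d$ and $\sigma_k/(\sigma_k^2+b^2)$ for $k>n-d$. Rewriting $\sigma_k/(\sigma_k^2+c^2)=\bigl(\sigma_k^2/(\sigma_k^2+c^2)\bigr)(1/\sigma_k)$ converts these into the filter-factor form of \eqref{eq:beta_aLQbPQ}. The independence claim is then immediate, since the two sums run over the disjoint index blocks $k\le n-d$ and $k>n-d$, the first involving only $a$ and the second only $b$.

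The main obstacle is the bookkeeping forced by the rank deficiency $p>n$: because $X'X$ is singular, I must confirm that the computation genuinely takes place inside the invariant subspace $\col(V)$. The points to verify are that $L'L$ preserves this subspace---which holds precisely because $v_k\perp\mathcal{Q}$ for $k\le n-d$ forces $(I-P_\mathcal{Q})v_k=v_k$---and that $X'y$ has no component in $\Null(X)$, so the $a^2I$ block acting on $\Null(X)$ never contributes. The remaining hazard is purely notational: the singular values are indexed in increasing order and the $\nd$/$\oo$ block convention is inherited from the GSVD, so care is needed to attach $a$ to the $(I-P_\mathcal{Q})$ block and $b$ to the $P_\mathcal{Q}$ block and to get the index ranges in \eqref{eq:beta_aLQbPQ} right.
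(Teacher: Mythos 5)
Your proof is correct, and it takes a somewhat different route from the paper's. You work directly with the penalized normal equations: $L'L=a^2(I-P_\mathcal{Q})+b^2P_\mathcal{Q}$, and since $\mathcal{Q}$ is spanned by right singular vectors, $X'X+L'L$ is diagonalized in the $\{v_k\}$ basis (with a residual $a^2I$ block on $\Null(X)$ that never contributes because $X'y=VDU'y\in\col(V)$); blockwise inversion then yields the filter factors. The paper instead routes the argument through the standard-form transformation of Section~\ref{sec:StdForm}: it writes $\bbX=XL^{-1}=U(D\Lambda^{-1})V'$, computes the ridge estimate $\tilde{\bbbeta}=VFG^\dag U'y$ of the transformed problem, and maps back via $L^{-1}V=V\Lambda^{-1}$ to obtain $\tilde{\beta}_{a,b}=V\Lambda^{-1}F\Lambda D^{-1}U'y$. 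Both arguments pivot on the same fact---$P_\mathcal{Q}$ is built from right singular vectors, so everything diagonalizes simultaneously in the $V$ basis---but yours is more elementary and self-contained, at the cost of handling the rank deficiency of $X'X$ explicitly (which you do correctly by checking that $\col(V)$ is invariant and that $X'y$ has no $\Null(X)$ component). The paper's route buys consistency with the $\bbX$-machinery it reuses for general $\mathcal{Q}$ in \eqref{eq:beta_aLQbPQ_W} and for the REML implementation. One point in your favor on the ``notational hazard'' you flag: your bookkeeping places the second block at $k=n-d+1,\dots,n$, which agrees with the paper's own displayed derivation ($V_{\oo}F_{\oo}D_{\oo}^{-1}{U_{\oo}}'y$); the lower limit $d+1$ printed in \eqref{eq:beta_aLQbPQ} appears to be a typo.
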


Similar arguments can be used to decompose an estimate for arbitrary
$\mathcal{Q}$: 
\begin{equation}\label{eq:beta_aLQbPQ_W}
 \tilde{\beta}_{a,b}
 = \sum_{k=1}^{n-d} \left(\frac{\sigma_k^2}{\sigma_k^2+a^2\mu_k^2}\right) \frac{1}\sigma_k u_k'y \, w_k
 + \sum_{k=d+1}^n  \left(\frac{\sigma_k^2}{\sigma_k^2+b^2\mu_k^2}\right) \frac{1}\sigma_k u_k'y \, w_k.
\end{equation}
In this case, however, all terms are dependent on both $a$ and $b$. Indeed,
using notation as in \eqref{eq:beta_aLQ} one can decompose
$\bbX=\frac{1}{a}U\Gamma_{\nd} V' + \frac{1}{b}U\Gamma_{\oo} V'$ and obtain
$\tilde\bbbeta=VF\Gamma^\dag U'y$.  However,  $L^{-1}V=WM^\dag$, and the
non-orthogonal terms provided by $W$ do not decompose the estimate into terms
from the orthogonal sum $\mathcal{Q} \oplus \mathcal{Q}^\perp$.

The following corollary, along with Remark~\ref{rmk:betaL_betaR},
records the manner in which \eqref{eq:beta_aLQbPQ} is a family of
penalized estimates, parameterized by $a,b>0$ and $d\in \{1,...,n\}$,
that extends some standard estimates.

\begin{corollary}\label{cor:beta_ab_family}
Under the conditions in Theorem~\ref{thm:beta_aLQbPQ},
\begin{enumerate}\itemsep=-3pt
    \item when $a>b>0$, \ $\tilde{\beta}_{a,b}$ is a sum of
        weighted ridge estimates on $\mathcal{Q}$ and
        $\mathcal{Q}^\perp$;
    \item when $a>0$ and $b=0$, $\tilde{\beta}_{a,0}$ is given by
        \eqref{eq:beta_aLQ}, which is a sum of PCR and ridge estimates on
        $\mathcal{Q}$ and $\mathcal{Q}^\perp$, respectively;
    \item for each $d$, the PCR estimate $\tilde{\beta}_{\mbox{\tiny PCR}}^d$ is the
        limit of $\tilde{\beta}_{a,0}$ as $a\to\infty$.
\end{enumerate}
\end{corollary}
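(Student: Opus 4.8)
\emph{Approach.} The plan is to obtain all three statements by specializing the two filter factors in the decomposition \eqref{eq:beta_aLQbPQ} of Theorem~\ref{thm:beta_aLQbPQ} and recognizing each of the two sums as a standard estimate confined to one of the orthogonal subspaces $\mathcal{Q}=\span(V_{\oo})$ and $\mathcal{Q}^\perp=\span(V_{\nd})$. Because these subspaces are spanned by disjoint blocks of the ordinary right singular vectors $\{v_k\}$, the first sum in \eqref{eq:beta_aLQbPQ} lies entirely in $\mathcal{Q}^\perp$ and the second entirely in $\mathcal{Q}$; no cross terms arise, so the identifications below are exact rather than approximate.

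\emph{Part 1.} Writing the ridge estimate with parameter $\lambda$ as $\tilde\beta_{\lambda,I}=\sum_{k}\frac{\sigma_k^2}{\sigma_k^2+\lambda}\frac{1}{\sigma_k}u_k'y\,v_k$ from \eqref{eq:betaRidge}, I would compare filter factors term by term with \eqref{eq:beta_aLQbPQ}. This identifies the first sum with $(I-P_\mathcal{Q})\tilde\beta_{a^2,I}$ and the second with $P_\mathcal{Q}\tilde\beta_{b^2,I}$, whence $\tilde\beta_{a,b}=(I-P_\mathcal{Q})\tilde\beta_{a^2,I}+P_\mathcal{Q}\tilde\beta_{b^2,I}$. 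These are ridge estimates carrying the distinct (hence weighted) parameters $a^2$ on $\mathcal{Q}^\perp$ and $b^2$ on $\mathcal{Q}$; for $a>b>0$ the subspace $\mathcal{Q}$ is penalized less heavily, as claimed.

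\emph{Part 2.} Setting $b=0$ in the second filter factor gives $\frac{\sigma_k^2}{\sigma_k^2+0}=1$, so that sum collapses to $\sum_{v_k\in\mathcal{Q}}\frac{1}{\sigma_k}u_k'y\,v_k$, i.e., the truncated-SVD estimate on the $d$ largest components, which is exactly $\tilde{\beta}_{\mbox{\tiny PCR}}^d$; the first sum is unchanged and remains the ridge estimate $(I-P_\mathcal{Q})\tilde\beta_{a^2,I}$ on $\mathcal{Q}^\perp$. I expect this to be the step needing the most care, since $b=0$ renders $L=a(I-P_\mathcal{Q})$ singular with $\Null(L)=\mathcal{Q}$, so the invertible-$L$ derivation behind Theorem~\ref{thm:beta_aLQbPQ} no longer applies verbatim and the result must be reconciled with the general GSVD expansion \eqref{eq:beta_aLQ}. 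I would handle this by taking the (continuous) limit $b\to0$ in \eqref{eq:beta_aLQbPQ} and verifying agreement with \eqref{eq:beta_aLQ}: since $X'X=VD^2V'$ and $L'L=a^2(I-P_\mathcal{Q})$ are simultaneously diagonalized by $V$, the generalized singular vectors satisfy $w_k=v_k$ (up to $\Null(X)$), so the parameter-free term $W_{\oo}U_{\oo}'y$ of \eqref{eq:beta_aLQ} is precisely the PCR estimate on $\Null(L)=\mathcal{Q}$ and the remaining term is the projected ridge estimate on $\mathcal{Q}^\perp$.

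\emph{Part 3.} This is immediate from the Part~2 decomposition $\tilde\beta_{a,0}=(I-P_\mathcal{Q})\tilde\beta_{a^2,I}+\tilde{\beta}_{\mbox{\tiny PCR}}^d$. Since $X$ has rank $n$, every $\sigma_k>0$, so each coefficient $\frac{\sigma_k^2}{\sigma_k^2+a^2}\frac{1}{\sigma_k}=\frac{\sigma_k}{\sigma_k^2+a^2}$ in the first (finite) sum tends to $0$ as $a\to\infty$; the projected ridge term therefore vanishes termwise and $\lim_{a\to\infty}\tilde\beta_{a,0}=\tilde{\beta}_{\mbox{\tiny PCR}}^d$, with no uniformity issues because the sum has finitely many terms.
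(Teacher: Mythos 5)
Your proposal is correct and follows essentially the same route as the paper: the corollary is read off from the filter factors in the decomposition \eqref{eq:beta_aLQbPQ}, with items 2 and 3 obtained by setting $b=0$ and letting $a\to\infty$, exactly as in the paper's remarks following the corollary (which note $w_k\equiv v_k$, $\mu_k=0$ and $\sigma_k=1$ on the null-space block, and that PCR arises only as a limit). Your extra care in reconciling the singular case $L=a(I-P_{\mathcal{Q}})$ with the general expansion \eqref{eq:beta_aLQ} is exactly the reconciliation the paper sketches, so there is no substantive difference in approach.
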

In item 2, this estimate is similar to PCR except that a ridge penalty is placed on the
least-dominant singular vectors. Under the assumptions here, $w_k\equiv v_k$ are the
ordinary singular vectors of $X$ and the ordinary singular values appear as
$\gamma_k=\sigma_k/\mu_k$, for $\mu_k>0$.  In the second term of \eqref{eq:beta_aLQ}, the
singular vectors are in the null space of $L$ (since $b=0$), and so $\mu_k=0$ and
$\sigma_k=1$, for $k=n-d+1,...,p$. Regarding item 3, although a PCR estimate is not
obtained from equation~\eqref{eq:beta_NormalEqns} for any $L$, it is a limit of such
estimates.

Other decompositions may be obtained simply by using a permutation, such as $Q=\Pi V$,
for some $n\times n$ permutation matrix $\Pi$. Stein's estimate,
$\tilde\beta_{\alpha,\mbox{\tiny S}}$, also fits into this framework as follows.

When $X'X$ is nonsingular, then $\tilde\beta_{\alpha,\mbox{\tiny S}}= (X'X+\alpha
X'X)^{-1}X'y$ (see, e.g., the class `STEIN' in \cite{DemSchWer:77}), and $X'X=VD'DV'$.
Hence this estimate arises from the penalty $L_{\mbox{\tiny S}}=DV'$. This is a
re-weighted version of $L=a(I-P_\mathcal{Q})$ where $d=n$, $Q=V$ and the parameter $a$ is
replaced by the matrix $D$. The result is a constant filter factor
$F=\diag\{1/(1+\alpha)\}$.  Using $d<n$ and $Q=V_d$ is a natural extension of this idea.
More generally, $\mathcal{Q}$ may be enriched with functions that span a wider range of
structure potentially relevant to the estimate.
This concept is illustrated in Section~\ref{sec:RamanApplic} where instead of $V_d$, we
use a $d$-dimensional set of experimentally-derived ``template" spectra supplemented with
their derivatives to define $\mathcal{Q}$.

As an aside, we note that in a different approach to regularization one can
define a general family of estimates arising from the SVD by way of
$\tilde{\beta}_{h,\varphi}=V\Sigma_hU'y$, where
$\Sigma_h=\diag\{\frac{\sigma_k}{h}\varphi(\frac{\sigma_k^2}{h^2}) \}$, and
$\varphi:\bbR_+\to\bbR$ is an arbitrary continuous function \cite{Neum:98}.  A
ridge estimate is obtained for $\varphi(t)=1/(1+t)$, and PCR obtained for
$\varphi(t)=1/t$ if $t>1$, $\varphi(t)=0$ if $t\le 1$ (an $L^2$-limit of
continuous functions). This is similar to item~3 in
Corollary~\ref{cor:beta_ab_family}, but the family of estimates
$\tilde{\beta}_{h,\varphi}$ is formulated in terms of functional filter factors
rather than explicit penalty operators. Related to this is the fact that the
optimal (with respect to MSE) estimate  using SVD filter factors is, in the
case $C=\sigma_\epsilon I$, expressed as
 $\tilde{\beta}_{\mbox{\tiny OH}}=VFD^\dag V'y$, where
$F=\diag\{\sigma_k^2/(\sigma_k^2+\sigma_\epsilon^2(v_k'\beta)^{-2}) \}$; see the ``ideal
filter" of \cite{ObrienHolt:81}.  In fact, it's easy to check that this optimal estimate
can be obtained as $\tilde{\beta}_{\mbox{\tiny OH}}=\tilde{\beta}_{\alpha,L}$ for some
$L\neq I$. Since $\tilde{\beta}_{\mbox{\tiny OH}}$ involves knowledge of $\beta$, it is
not directly obtainable but it points to the optimality of a PEER estimate.

\subsection{The MSE of some penalized estimates.}\label{sec:MSEcompare}

Theorem~\ref{thm:beta_aLQbPQ} is used here to show that $\tilde{\beta}_{a,b}$
can have smaller MSE than the ridge or PCR estimates for a wide range of values
of $a$ and/or $b$.  The MSE is potentially decreased further when $L$ is
defined by a more general $\mathcal{Q}$.   In that case, a general statement is
difficult to formulate but Proposition~\ref{prop:MSE_perturb} confirms that any
improvement in MSE is robust to perturbations in $L$ (e.g., general
$\mathcal{Q}$) and errors in $x$.

An immediate consequence of Theorem~\ref{thm:beta_aLQbPQ} is that the
mean squared error for an estimate in this family
\eqref{eq:beta_aLQbPQ} decomposes into easily-identifiable terms for
the bias and variance:
\begin{equation}\label{eq:MSE_beta_abV}
\begin{aligned}
 \mbox{MSE}(\tilde{\beta}_{a,b}) = \sigma_\epsilon^2
   & \sum_{k=1}^{n-d} \frac{\sigma_k^2}{(\sigma_k^2+ a^2)^2}
     +  \sum_{k=1}^{n-d}  \left(\frac{a^2}{\sigma_k^2+ a^2}\right)^2  (v_k'\beta)^2   \\
   & + \sigma_\epsilon^2 \sum_{k=n-d+1}^{n} \frac{\sigma_k^2}{(\sigma_k^2+ b^2)^2}
     + \sum_{k=n-d+1}^{n} \left(\frac{b^2}{\sigma_k^2+ b^2}\right)^2  (v_k'\beta)^2.
\end{aligned}
\end{equation}
The influence of $b=0$ on the estimate is now clear: when the numerical rank of $X$ is
small relative to $d$, the $\sigma_k$'s in the third term decrease and the contribution
to the variance from this term increases---the estimate fails for the same reason that
ordinary least-squares fails.  Any nonzero $b$ stabilizes the estimate in the same way
that a nonzero $\alpha$ stabilizes a standard ridge estimate; the decomposition
\eqref{eq:beta_aLQbPQ} merely re-focuses the penalty. This is illustrated in
Section~\ref{sec:NumProps} (Table~\ref{tab:estL})  and  in the Appendix
(Table~\ref{tab:estF}). Although there are three parameters to consider, the MSE of
$\tilde{\beta}_{a,b}$ is relatively insensitive to $b>0$ for sufficiently large $d$. This
could be optimized (similar to efforts to optimize the number of principal components)
but here we assume some knowledge regarding $\mathcal{Q}$, hence $d$. Relationships
between ridge, PCR and PEER estimates in this family $\{\tilde{\beta}_{a,b}\}_{a,b>0}$
can be quantified more specifically as follows.

\begin{proposition}\label{prop:MSE_Ridge_beta}
Suppose $\beta\in\mathcal{Q}$ and fix $\alpha>0$. Then for any $a>\sqrt{\alpha}$, the
ridge estimate satisfies
\begin{equation*} \mbox{MSE}(\tilde{\beta}_{\alpha,I})
  \equiv \mbox{MSE}(\tilde{\beta}_{\sqrt{\alpha},\sqrt{\alpha}})
  > \mbox{MSE}(\tilde{\beta}_{a,\sqrt{\alpha}}).
\end{equation*}
\end{proposition}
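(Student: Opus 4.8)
The plan is to exploit the explicit MSE formula \eqref{eq:MSE_beta_abV} together with the assumption $\beta\in\mathcal{Q}$. Recall from the construction that $\mathcal{Q}=\span\{q_j\}$ is exactly $\Range(V_{\oo})$, i.e.\ the span of the $d$ largest right singular vectors, so $P_\mathcal{Q}=V_{\oo}V_{\oo}'$. The hypothesis $\beta\in\mathcal{Q}$ forces $v_k'\beta=0$ for every index $k$ in the \emph{first} block $k=1,\dots,n-d$ (these $v_k$ are orthogonal to $\mathcal{Q}$). Consequently the second sum in \eqref{eq:MSE_beta_abV}---the bias contribution from $\mathcal{Q}^\perp$---vanishes identically, and the MSE for $\tilde\beta_{a,b}$ collapses to
\begin{equation*}
\mbox{MSE}(\tilde\beta_{a,b})=\sigma_\epsilon^2\sum_{k=1}^{n-d}\frac{\sigma_k^2}{(\sigma_k^2+a^2)^2}
+\sigma_\epsilon^2\sum_{k=n-d+1}^{n}\frac{\sigma_k^2}{(\sigma_k^2+b^2)^2}
+\sum_{k=n-d+1}^{n}\Bigl(\frac{b^2}{\sigma_k^2+b^2}\Bigr)^2(v_k'\beta)^2.
\end{equation*}

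The next observation is that the parameter $a$ now appears \emph{only} in the first sum, whereas $b$ appears only in the last two sums. This is the decoupling promised by Theorem~\ref{thm:beta_aLQbPQ}: the two blocks contribute independently. Both $\mbox{MSE}(\tilde\beta_{\sqrt\alpha,\sqrt\alpha})$ and $\mbox{MSE}(\tilde\beta_{a,\sqrt\alpha})$ share an \emph{identical} second block (since both use $b=\sqrt\alpha$ there), so the entire comparison reduces to comparing the single first-block term as a function of the variable in the first slot. Thus it suffices to show that, for each fixed $k\in\{1,\dots,n-d\}$, the map
\begin{equation*}
g_k(a):=\frac{\sigma_k^2}{(\sigma_k^2+a^2)^2}
\end{equation*}
is strictly decreasing in $a>0$, so that $g_k(a)<g_k(\sqrt\alpha)$ whenever $a>\sqrt\alpha$. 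Summing over $k$ and multiplying by $\sigma_\epsilon^2$ then yields the strict inequality for the first block, and since the remaining blocks are equal the desired strict inequality $\mbox{MSE}(\tilde\beta_{\sqrt\alpha,\sqrt\alpha})>\mbox{MSE}(\tilde\beta_{a,\sqrt\alpha})$ follows.

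The monotonicity of $g_k$ is the only piece requiring a genuine (if elementary) argument: differentiating gives $g_k'(a)=-4\sigma_k^2\,a/(\sigma_k^2+a^2)^3<0$ for $a>0$ and $\sigma_k>0$, so each $g_k$ is strictly decreasing. (One should note that the identity $\mbox{MSE}(\tilde\beta_{\alpha,I})=\mbox{MSE}(\tilde\beta_{\sqrt\alpha,\sqrt\alpha})$ asserted in the proposition is just Remark~\ref{rmk:betaL_betaR}, which identifies the ridge estimate with the $a=b=\sqrt\alpha$ member of the family, so no separate verification of that equality is needed.) I do not expect a serious obstacle here; the only subtlety to be careful about is the ordering convention for singular values in the GSVD versus the SVD---the $v_k$ with small index are those \emph{orthogonal} to $\mathcal{Q}$, which is precisely what makes the first-block bias terms vanish---so the main task is simply to state this indexing cleanly and then invoke the termwise monotonicity.
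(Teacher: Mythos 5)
Your proof is correct and follows essentially the same route as the paper's: the paper likewise observes that $\beta\in\mathcal{Q}$ makes ${V_{\nd}}'\beta=0$, so the second term of \eqref{eq:MSE_beta_abV} vanishes, and then notes that the first (variance) term decreases as $a$ grows past $\sqrt{\alpha}$ while the $b$-dependent terms are unchanged. You merely make explicit the termwise monotonicity of $\sigma_k^2/(\sigma_k^2+a^2)^2$ that the paper leaves as an unstated elementary fact.
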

\begin{proof} This follows from the fact that  ${V_{\nd}}'\beta=0$ and so the second
term in \eqref{eq:MSE_beta_abV} is zero.  Therefore, the contribution to the
MSE by the first term is decreased whenever $a>\sqrt{\alpha}$.
\end{proof}

If $\beta$ is exactly a sum of the $d$ dominant right singular vectors, A PCR estimate
using $d$ terms may perform well, but is not optimal:
\begin{proposition}\label{prop:MSE_PCR_beta}
If $\beta\in\mathcal{Q}$, a sufficient condition for the  PCR estimate to satisfy
\begin{equation*}\label{eq:suff_MSE_PCR I}
\mbox{MSE}({\tilde{\beta}^d}_{\mbox{\tiny PCR}})
  \equiv \mbox{MSE}(\tilde{\beta}_{\infty,0})
   > \mbox{MSE}(\tilde{\beta}_{\infty,b})
\end{equation*}
is
\begin{equation}\label{eq:suff_MSE_PCR_II}
  \sigma_\epsilon^2\left( \sum_{k=n-d+1}^{n} \frac{1}{\sigma_k^2}+ \frac{2d}{b^2}\right)
  >  ||{V_{\oo}}'\beta||^2
\end{equation}
\end{proposition}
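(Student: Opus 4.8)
The plan is to specialize the MSE formula~\eqref{eq:MSE_beta_abV} to the two estimates being compared and then reduce the desired inequality to a statement about the singular values $\sigma_k$ and the coordinates $v_k'\beta$. First I would evaluate $\mbox{MSE}(\tilde{\beta}_{\infty,b})$ by letting $a\to\infty$ in~\eqref{eq:MSE_beta_abV}: since $\beta\in\mathcal{Q}=\span\{v_k\}_{k=n-d+1}^n$ we have $v_k'\beta=0$ for $k\le n-d$, so the second sum vanishes identically and the first sum tends to $0$; only the $\mathcal{Q}$-block (indices $n-d+1,\dots,n$) survives. This gives $\mbox{MSE}(\tilde{\beta}_{\infty,b})=\sigma_\epsilon^2\sum_{k=n-d+1}^n \sigma_k^2/(\sigma_k^2+b^2)^2+\sum_{k=n-d+1}^n \big(b^2/(\sigma_k^2+b^2)\big)^2(v_k'\beta)^2$, and setting $b=0$ collapses this to the pure-variance expression $\mbox{MSE}(\tilde{\beta}_{\infty,0})=\sigma_\epsilon^2\sum_{k=n-d+1}^n 1/\sigma_k^2$ for PCR.

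Next I would form the difference and simplify. I would bound the bias of $\tilde{\beta}_{\infty,b}$ from above using the filter factor $b^2/(\sigma_k^2+b^2)\le 1$, which yields $\sum_{k=n-d+1}^n \big(b^2/(\sigma_k^2+b^2)\big)^2(v_k'\beta)^2\le\sum_{k=n-d+1}^n (v_k'\beta)^2=||{V_{\oo}}'\beta||^2$; this is how the right-hand side of the claimed condition enters. The variance difference I would combine over the common denominator $\sigma_k^2(\sigma_k^2+b^2)^2$, using $(\sigma_k^2+b^2)^2-\sigma_k^4=b^2(2\sigma_k^2+b^2)$, so that $\sigma_\epsilon^2\sum(1/\sigma_k^2-\sigma_k^2/(\sigma_k^2+b^2)^2)=\sigma_\epsilon^2\sum_{k=n-d+1}^n b^2(2\sigma_k^2+b^2)/[\sigma_k^2(\sigma_k^2+b^2)^2]$. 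It then suffices to show this variance gain dominates $||{V_{\oo}}'\beta||^2$, and the quantity $\sigma_\epsilon^2(\sum 1/\sigma_k^2+2d/b^2)$ appearing on the left of the stated condition is the natural candidate, obtained by attempting to bound each summand below by $\sigma_\epsilon^2(1/\sigma_k^2+2/b^2)$.

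The hard part will be this last bound. Setting $w_k:=\big(b^2/(\sigma_k^2+b^2)\big)^2\in(0,1)$, the whole difference factors cleanly as $\mbox{MSE}(\tilde{\beta}_{\infty,0})-\mbox{MSE}(\tilde{\beta}_{\infty,b})=\sum_{k=n-d+1}^n w_k\big[\sigma_\epsilon^2(2/b^2+1/\sigma_k^2)-(v_k'\beta)^2\big]$, and the claimed condition is precisely the positivity of the \emph{unweighted} sum of the bracketed terms. The obstacle is that the same weights $w_k$ multiply both the variance gain and the bias, so the per-term factor $b^2(2\sigma_k^2+b^2)/[\sigma_k^2(\sigma_k^2+b^2)^2]$ does not on its own dominate $1/\sigma_k^2+2/b^2$ (it is in fact smaller), and a naive term-by-term discard of the weights goes the wrong way. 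I would therefore avoid bounding the two pieces separately and instead argue at the level of the weighted sum $\sum_k w_k[\,\cdots]$, exploiting that $w_k$ is monotone in $\sigma_k$ to pair positive and negative brackets, or else strengthen the hypothesis so that each bracket is controlled term-by-term. I expect this accounting of the shared filter weights, rather than the algebra of the first two steps, to be the crux: here the relevant bias on $\mathcal{Q}$ does not vanish, in contrast to the analogous Proposition~\ref{prop:MSE_Ridge_beta}, where $\beta\in\mathcal{Q}$ killed the bias term outright and made the comparison immediate.
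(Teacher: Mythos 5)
Your reduction is, up to the last step, exactly the paper's: both of you specialize \eqref{eq:MSE_beta_abV} at $a=\infty$, cancel the common $\mathcal{Q}^\perp$ contribution, and reduce the claim to showing that the variance gain $\sigma_\epsilon^2\sum_k\bigl(1/\sigma_k^2-\sigma_k^2/(\sigma_k^2+b^2)^2\bigr)$ exceeds the residual bias-plus-variance on $\mathcal{Q}$. The paper then disposes of the remaining implication with ``it's easy to check that this is satisfied when \eqref{eq:suff_MSE_PCR_II} holds,'' i.e.\ it stops precisely where you stopped. Your factorization of the difference as $\sum_{k=n-d+1}^{n} w_k\bigl[\sigma_\epsilon^2(2/b^2+1/\sigma_k^2)-(v_k'\beta)^2\bigr]$ with $w_k=b^4/(\sigma_k^2+b^2)^2$ is correct, and so is your diagnosis: \eqref{eq:suff_MSE_PCR_II} is positivity of the \emph{unweighted} sum of brackets, the needed inequality is positivity of the \emph{weighted} sum, and since the weights vary with $k$ the implication fails termwise. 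This is not a defect of your approach but of the claim: for $d\ge 2$ the stated condition is not sufficient without further hypotheses. For instance, with $d=2$, $\sigma_\epsilon=b=1$, $\sigma_{n-1}=0.1$, $\sigma_n=10$, $(v_{n-1}'\beta)^2=103$ and $v_n'\beta=0$, condition \eqref{eq:suff_MSE_PCR_II} reads $104.01>103$ and holds, yet the weighted sum is $\approx 0.980\cdot(-1)+0.0001\cdot(2.01)<0$, so $\mbox{MSE}(\tilde{\beta}_{\infty,0})<\mbox{MSE}(\tilde{\beta}_{\infty,1})$.

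The repairs you sketch are the right ones. (i) The per-index condition $\sigma_\epsilon^2(1/\sigma_k^2+2/b^2)>(v_k'\beta)^2$ for each $k=n-d+1,\dots,n$ makes every bracket positive, hence the weighted sum positive, and summing it recovers \eqref{eq:suff_MSE_PCR_II}; so the displayed condition is a consequence of, but weaker than, a correct sufficient condition. (ii) For $d=1$ the weight is a common positive factor and \eqref{eq:suff_MSE_PCR_II} is exactly equivalent to the conclusion. (iii) Your ``pairing'' idea can be made precise via Chebyshev's sum inequality: under the ordering the paper invokes only informally after the proposition (the ``premise of PCR'' that $(v_k'\beta)^2$ increases with $\sigma_k$), both $w_k$ and the brackets are decreasing in $\sigma_k$, so $\sum_k w_kB_k\ge\frac{1}{d}\bigl(\sum_k w_k\bigr)\bigl(\sum_k B_k\bigr)$ and positivity of the unweighted sum does transfer. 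Any of these should be stated as an explicit hypothesis; as written, the proposition's final step does not go through, and you were right not to paper over it.
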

Note that the left side of \eqref{eq:suff_MSE_PCR_II} increases without bound as
$\sigma_k\to 0$. Since $||{V_{\oo}}'\beta||^2=\sum_{k=n-d+1}^{n} (v_k'\beta)^2$, and
since the premise of PCR is that $v_k'\beta$ decreases with decreasing $\sigma_k$, this
sufficient condition is entirely plausible.

\begin{proof}  The MSE of ${\tilde{\beta}^d}_{\mbox{\tiny PCR}}$ consists of the second and third terms of
\eqref{eq:MSE_beta_abV}:
\begin{equation*}
 \mbox{MSE}({\tilde{\beta}^d}_{\mbox{\tiny PCR}})=
      \sum_{k=1}^{n-d} (v_k'\beta)^2
      + \sigma_\epsilon^2 \sum_{k=n-d+1}^{n} \frac{1}{\sigma_k^2}.
\end{equation*}
 In particular,
a sufficient condition for this to exceed
$\mbox{MSE}(\tilde{\beta}_{\infty,b})$ is for the variance term to exceed the
third and fourth terms of \eqref{eq:MSE_beta_abV}:
\begin{equation*}
\sigma_\epsilon^2 \sum_{k=n-d+1}^{n} \frac{1}{\sigma_k^2} > \sigma_\epsilon^2 \sum_{k=n-d+1}^{n} \frac{\sigma_k^2}{(\sigma_k^2+ b^2)^2}
     + \sum_{k=n-d+1}^{n} \left(\frac{b^2}{\sigma_k^2+ b^2}\right)^2  (v_k'\beta)^2.
\end{equation*}
Its easy to check that this is satisfied when
\eqref{eq:suff_MSE_PCR_II} holds.
\end{proof}

A comment by Bingham and Larntz \cite{BingLarntz:77} on the intensive simulation study of
ridge regression in \cite{DemSchWer:77} notes that ``it is not at all clear that ridge
methods offer a clear-cut improvement over [ordinary] least squares except for particular
orientations of $\beta$ relative to the eigenvectors of $X'X$."
Equation \eqref{eq:MSE_beta_abV} repeats this observation relating these two classical
methods as well as the minor extensions contained in \eqref{eq:beta_aLQbPQ}. If, on the
other hand, ``the orientation of $\beta$ relative to the [$v_k$'s]" is not favorable,
i.e., if $\beta$ is nowhere near the range of $V$, then a PEER estimate as in
\eqref{eq:beta_aLQbPQ_W} is more desirable than \eqref{eq:beta_aLQbPQ} (assuming
sufficient prior knowledge).

In summary, the family of estimates $\{\tilde{\beta}_{a,b}\}_{a,b>0}$ in
\eqref{eq:beta_aLQbPQ} represents a hybrid of ridge and PCR estimation.  This
family---based on the ordinary singular vectors of $X$---is introduced here to provide a
framework within which these two familiar estimates can be compared to (slightly) more
general PEER estimates. Direct analytical comparison between general PEER estimates is
more difficult since there's no simple relationship between the generalized singular
vectors for two different $L$ (including $L=I$ versus $L\neq I$). However, it is
important that the estimation process be stable with respect to changes in $L$ and/or
$X$.  I.e., in going from an estimate in \eqref{eq:beta_aLQbPQ} to one in
\eqref{eq:beta_aLQbPQ_W}, the performance of the estimate should be predictably altered.
Given an estimate in Proposition~\ref{prop:MSE_Ridge_beta}, if $\mathcal{Q}$ is modified
and/or $X$ is observed with error, the MSE of the corresponding estimate,
$\tilde{\beta}^E_{\alpha,L}$, should be controlled: for sufficiently small perturbation
$E$, the corresponding estimate $\mbox{MSE}(\tilde{\beta}^E_{\alpha,L})$ should be close
to $\mbox{MSE}(\tilde{\beta}_{\alpha,I})$. This ``stability" is true in general. To see
this, recall $Z=\matr{cc}{X' &\sqrt{\alpha} L'}'$ (of rank $p$) and
$\underline{y}=\matr{cc}{y'& 0}'$. Then another way to represent the estimate
\eqref{eq:beta_NormalEqns} is $\tilde{\beta}_{\alpha,L}=Z^\dag \underline{y}$. Let
$E=\matr{cc}{{E_1}' & {E_2}'}'$ for some $n\times p$ and $m\times p$ matrices $E_1$ and
$E_2$. Set $Z_E=Z+E$ and denote the perturbed estimate by
$\tilde{\beta}^E_{\alpha,L}=Z_E^\dag \underline{y}$.  By continuity of the generalized
inverse (e.g., \cite{Bjorck:96}, Section~1.4), $\lim_{||E||\to 0} Z_E^\dag=Z^\dag$ if and
only if $\lim_{||E||\to 0}\rank(Z_E)=\rank(Z)$. Therefore, provided the rank of $Z$ is
not changed by $E$,
\begin{equation*}
\lim_{||E||\to 0}||\tilde{\beta}_{\alpha,L}-{\tilde{\beta}^E}_{\alpha,L}||\le
\lim_{||E||\to 0} ||Z^\dag-Z_E^\dag||\, ||y||= 0,
\end{equation*}
and hence $\mbox{MSE}({\tilde{\beta}^E}_{\alpha,L}) \to
\mbox{MSE}({\tilde{\beta}}_{\alpha,L})$ as $||E||\to 0$.  A more
specific bound on the difference of estimates can be obtained under the
condition $||Z^\dag|| ||E||<1$ which implies that $||Z_E^\dag||<
\frac{||Z^\dag||}{1-||Z^\dag|| ||E||}$.  This can be used to obtain the
following  bound.
\begin{proposition}\label{prop:MSE_perturb}
Assume $||Z^\dag|| ||E||<1$ and let
$r=\underline{y}-Z\tilde{\beta}_{\alpha,L}$.  Then
\begin{equation*}
||\tilde{\beta}_{\alpha,L}-{\tilde{\beta}^E}_{\alpha,L}||\le
\frac{||Z^\dag|| ||E|| }{1-||Z^\dag|| ||E||}\left(||\tilde{\beta}_{\alpha,L}||+||Z^\dag|| ||r|| \right).
\end{equation*}
\end{proposition}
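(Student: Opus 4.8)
The plan is to recognize this as the classical perturbation bound for a full column-rank least-squares problem (cf.~\cite{Bjorck:96}, Section~1.4), applied to $Z=\matr{c}{X\\ \sqrt{\alpha} L}$ and $\underline{y}=\matr{c}{y\\ 0}$. Because $\rank(Z)=p$, both $Z$ and $Z_E=Z+E$ have full column rank---the hypothesis $\|Z^\dag\|\,\|E\|<1$ is precisely what forces $\rank(Z_E)=p$, as already noted before the statement---so $Z^\dag Z=Z_E^\dag Z_E=I_p$ and $\tilde\beta_{\alpha,L}=Z^\dag\underline{y}$, $\tilde\beta^E_{\alpha,L}=Z_E^\dag\underline{y}$. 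The residual $r=\underline{y}-Z\tilde\beta_{\alpha,L}=(I-ZZ^\dag)\underline{y}$ lies in $\Range(Z)^\perp$, i.e.~$Z'r=0$; this single orthogonality relation is what keeps the residual contribution small, and I would exploit it below.

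First I would isolate the difference in closed form. Writing $\underline{y}=Z\tilde\beta_{\alpha,L}+r$ and using $Z_E^\dag Z=Z_E^\dag(Z_E-E)=I-Z_E^\dag E$,
\[
\tilde\beta^E_{\alpha,L}=Z_E^\dag\underline{y}=(I-Z_E^\dag E)\tilde\beta_{\alpha,L}+Z_E^\dag r ,
\]
so that
\[
\tilde\beta_{\alpha,L}-\tilde\beta^E_{\alpha,L}=Z_E^\dag E\,\tilde\beta_{\alpha,L}-Z_E^\dag r .
\]
The first term is harmless: $\|Z_E^\dag E\,\tilde\beta_{\alpha,L}\|\le\|Z_E^\dag\|\,\|E\|\,\|\tilde\beta_{\alpha,L}\|$, and the supplied resolvent bound $\|Z_E^\dag\|\le\|Z^\dag\|/(1-\|Z^\dag\|\,\|E\|)$ converts this into the $\|\tilde\beta_{\alpha,L}\|$ part of the claim. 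For the second term I would use $Z'r=0$ to write $Z_E'r=(Z+E)'r=E'r$, which already exposes the factor $\|E\|$ that makes this term vanish as $E\to0$.

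The step I expect to be the crux is extracting the \emph{sharp} coefficient $\|Z^\dag\|\,\|r\|$---rather than the cruder $\|Z_E^\dag\|\,\|r\|$---from the residual term $Z_E^\dag r$. The naive estimate $\|Z_E^\dag r\|=\|(Z_E'Z_E)^{-1}E'r\|\le\|Z_E^\dag\|^2\|E\|\,\|r\|$ carries two copies of $\|Z_E^\dag\|$ and hence an extra factor $(1-\|Z^\dag\|\,\|E\|)^{-1}$, which is too weak. To avoid this I would peel off one copy via the range projector: since $Z_E^\dag=Z_E^\dag P_{Z_E}$ with $P_{Z_E}=Z_EZ_E^\dag$, and $P_Zr=0$ gives $r=(I-P_Z)r$,
\[
Z_E^\dag r=Z_E^\dag P_{Z_E}(I-P_Z)r ,
\]
and the middle factor obeys $\|P_{Z_E}(I-P_Z)\|=\|(I-P_Z)P_{Z_E}\|=\|(I-P_{Z_E})P_Z\|$, the first equality by symmetry of the projectors and the second because, for equidimensional ranges, $(I-P_Z)P_{Z_E}$ and $(I-P_{Z_E})P_Z$ share the same largest principal-angle sine. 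The payoff of this rewriting is that $(I-P_{Z_E})P_Z=-(I-P_{Z_E})EZ^\dag$ (from $(I-P_{Z_E})Z_E=0$), so $\|P_{Z_E}(I-P_Z)\|\le\|E\|\,\|Z^\dag\|$ now involves $\|Z^\dag\|$ and not $\|Z_E^\dag\|$. Hence $\|Z_E^\dag r\|\le\|Z_E^\dag\|\,\|E\|\,\|Z^\dag\|\,\|r\|\le\frac{\|Z^\dag\|\,\|E\|}{1-\|Z^\dag\|\,\|E\|}\,\|Z^\dag\|\,\|r\|$, and adding the two contributions yields exactly the stated bound. The only nonroutine ingredient is this projector identity; everything else is submultiplicativity of the operator norm together with the given bound on $\|Z_E^\dag\|$.
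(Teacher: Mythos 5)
Your proof is correct. The paper offers no argument of its own for this proposition---it simply defers to Bj\"orck (Section~1.4) and Hansen---and what you have written is precisely the classical proof found there: the splitting $\tilde\beta_{\alpha,L}-\tilde\beta^E_{\alpha,L}=Z_E^\dag E\,\tilde\beta_{\alpha,L}-Z_E^\dag r$, the given resolvent bound on $\|Z_E^\dag\|$ for the first term, and Wedin's projector identity $\|(I-P_Z)P_{Z_E}\|=\|(I-P_{Z_E})P_Z\|\le\|E\|\,\|Z^\dag\|$ (valid here because $\rank(Z_E)=\rank(Z)=p$ under the hypothesis $\|Z^\dag\|\,\|E\|<1$) to extract the sharp coefficient $\|Z^\dag\|\,\|r\|$ from the residual term.
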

See \cite{Bjorck:96} and \cite{Hansen:89}.

\section{Tuning parameter selection} \label{sec:TuningParam}

Despite our focus on the GSVD, the computation of a PEER estimate in \eqref{eq:PenEst}
does not, of course, require that this decomposition be computed. Rather, the role of the
GSVD has been to provide analytical insight into the role a penalty operator plays in the
estimation process. For computation, on the other hand, we have chosen to use a method in
which the tuning parameter, $\alpha$, is estimated as part of the coefficient-function
estimation process.

Because the choice of tuning parameter is so important, many selection criteria have been
proposed, including generalized cross-validation (GCV) \cite{CravenWahba:79}, AIC and its
finite sample corrections \cite{Wood:06}. As an alternative to GCV and AIC, a
recently-proven equivalence between the penalized least squares estimation and a linear
mixed model (LMM) representation \cite{BrumRuppWand:99} can be used. In particular, the
best linear unbiased predictor (BLUP) of the response $y$ is composed of the best linear
unbiased estimator of the fixed effects and BLUP of the random effects for the given
values of the random component variances (see \cite{Speed:91} and
\cite{BrumRuppWand:99}). Within the LMM framework, restricted maximum likelihood (REML)
can be used to estimate the variance components and thus the choice of the tuning
parameter, $\alpha$, which is equal to the ratio of the error variance and the random
effects variance \cite{RuppertWandCarroll:03}.
 REML-based estimation of the tuning
parameter has been shown to perform at least as well as the other criteria and, under
certain conditions, it seen to be less variable than GCV-based estimation
\cite{ReissOgden:09}. In our case, the penalized least-squares criterion
\eqref{eq:PenEst} is equivalent to
\begin{equation}
\tilde\beta_{\alpha,L}=\argmin_\beta\{||y-X_{unp} \beta_{unp} - X_{pen} \beta_{pen}||^2
+ \alpha||L\beta||^2\}
\label{eq:mix}
\end{equation}
where $\beta = [\beta_{unp}' \ \beta_{pen}']'$, the $X_{unp}$ corresponds to the
unpenalized part of the design matrix, and $X_{pen}$ to the penalized part.

For simplicity of presentation, we describe the transformation with an invertible $L$.
However, a generalized inverse can be used in case $L$ is not of full rank; see
equation~\eqref{eq:bbbeta}.  Also, to facilitate a straightforward use of existing linear
mixed model routines in widely available software packages (e.g., \proglang{R}
\cite{Rproject:11} or \proglang{SAS} software \cite{SAS:2008}), we transform the
coefficient vector $\beta$ using the inverse of the matrix $L$. Let $X^{\star} = X L$ and
$\beta^{\star} = L^{-1} \beta$. Then equation \eqref{eq:mix} can be modified as follows
\begin{equation*}
\tilde\beta_{\alpha,L}^{\star} = \argmin_\beta\{||y - X^{\star} \beta^{\star}||^2
+ \alpha||\beta^{\star}||^2\}.
\end{equation*}
This REML-based estimation of tuning parameters is used in the application of
Section~\ref{sec:RamanApplic}.

For estimation of the parameters $a$, $b$ and $\alpha$ involved in the
decomposition-based penalty of equation~\eqref{eq:beta_aLQbPQ}, we view $a$ and $b$ as
weights in a tradeoff between the subspaces and assume $ab=$ const.  For implementation,
we fix one, estimate the other using a grid search, and use REML to estimate $\alpha$.

\section{Numerical examples}\label{sec:NumProps}
To illustrate algebraic properties given in Section~\ref{sec:AnalProps}, we consider PEER
estimation alongside some familiar methods in several numerical examples.
Section~\ref{sec:BumpsSim} elaborates on the simple example in
Section~\ref{sec:BumpsIntro}.  These mass spectrometry-like predictors are mathematically
synthesized in a manner similar to the study of Reiss and Ogden \cite{ReissOgden:07} (see
also a numerical study in \cite{StoutKalivas:06}).  Here, $\beta$ is also synthesized to
represent a spectrum, or specific set of bumps.  In contrast,
Section~\ref{sec:RamanApplic} presents a real application to Raman spectroscopy data in
which a set of spectra $\{x_i\}$ and nanoparticle concentrations $\{y_i\}$ are obtained
from sets of laboratory mixtures.  This laboratory-based application is preceded in
section~\ref{sec:RamanSim} by a simulation that uses these same Raman spectra. In both
Raman examples, targeted penalties \eqref{eq:LQPQ} are defined using discretized
functions $q_j$ chosen to span specific subspaces, $\mathcal{Q}=\span\{q_j\}_{j=1}^d$. As
before, let $Q=\col[q_1,...,q_d]$ and $P_\mathcal{Q}=QQ^\dag$.

Each section displays the results from several methods, including derivative-based
penalties. Implementing these requires a choice of discretization scheme and boundary
conditions which define the operator. We use $\mathcal{D}^2$ where
$\mathcal{D}=[d_{i,j}]$ is a square matrix with entries $d_{i,i}=1$, $d_{i,i+1}=-1$ and
$d_{i,j}=0$ otherwise.  In addition to some standard estimates,
sections~\ref{sec:RamanApplic} and \ref{sec:RamanSim} also consider $\mbox{FPCR}_R$, a
functional PCR estimate described in \cite{ReissOgden:07}.  This approach extends the
penalized B-spline estimates of \cite{CarFerSar:03} and assumes $\beta = B\eta$ where $B$
is an $p\times K$ matrix whose columns consist of $K$ B-spline functions and $\eta$ is a
vector of B-spline coefficients. The estimation process takes place in the coefficient
space using the penalty $L=\mathcal{D}^2$ applied to $\eta$. The $\mbox{FPCR}_R$ estimate
further assumes $\beta = BV_d\,\eta$ ($V_d$ as defined in section~\ref{sec:BkgdNotn}).

Estimation error is defined as mean squared error (MSE)
$||\beta-\tilde{\beta}_{\alpha,L}||^2$, and the prediction error defined similarly as
$\sum_i |y_i - \tilde{y_i}|^2$, where $\tilde{y_i}=\langle x_i,\tilde\beta\rangle$. Each
simulation incorporates response random errors, $\epsilon_i \sim
N(0,\sigma_{\epsilon}^2)$, added to the $i$th true response, $y^{\mbox{\tiny
true}}_i=\langle x_i,\beta\rangle$. Letting $S_Y^2$ denote the sample variance in the set
$\{y^{\mbox{\tiny true}}_i\}_{i=1}^n$, the response random errors, $\epsilon_i$, are
chosen such that $R^2 := S_Y^2/[S_Y^2 + \sigma^2_{\epsilon}]$ (the squared multiple
correlation coefficient of the true model) takes values 0.6 and 0.8. In
sections~\ref{sec:BumpsSim} and~\ref{sec:RamanSim}, tuning parameters are chosen by a
grid search. In section~\ref{sec:RamanApplic}, tuning parameters are chosen using REML,
as described in section~\ref{sec:TuningParam}.

\subsection{Bumps Simulation} \label{sec:BumpsSim}
Here we elaborate on the simple example of section~\ref{sec:BumpsIntro}. This simulation
involves bumpy predictor curves $x_i(t)$ with a response $y_i$ that depends on the
amplitudes $x_i(t)$ at some of the bump locations, $t=c_k$, via the regression function
$\beta$. In particular,
\begin{eqnarray*}
x_i(t) = \sum_{j \in J_X} a_{ij} \exp[-b_{j}(t-c_j)] + e_i(t), \quad
\beta(t) = \sum_{j \in J_{\beta}}  a_j \exp[-b_j(t-c_j)], \ \ t \in
[0,1]
\end{eqnarray*}
where $J_X = \{2,6,10,14,20,26,30\}$, $J_{\beta} = \{6,14,26\}$, $a_{\star}$ are the
magnitudes, $b_{\star}$ are the spreads, and $c_{\star}$ are the locations of the bumps.
In the first simulation, we set $b_j = 10000$ and $c_j=0.004(8j-1)$, the same for each
curve $x_i$.  This mimics, for instance, curves seen in mass spectrometry data. The
assumption $J_{\beta}\subset J_X$ simulates a setting in which the response is associated
with a subset of metabolite or protein features in a collection of spectra. The
$a_{ij}$'s are from a uniform distribution, and $a_j = 3, 5, 2$ for $j=6,14,24$,
respectively. We consider discretized curves, $x_i(t)$, evaluated at $p=250$ points,
$t_j$, $j=1, \ldots, p$. The sample size is fixed at $n=50$ in each case.

\noindent{\bf Penalties.}  We consider a variety of estimation procedures: ridge ($L=I$),
second-derivative ($\mathcal{D}^2$), a more general derivative operator
($\mathcal{D}^2+a\,I$) and PCR.  We also define two decomposition-based penalties
\eqref{eq:LQPQ} formed by specific subspaces $\mathcal{Q}=\span\{q_j\}_{j\in J}$ for
$q_j$ of the form $q_j(t)=a_j \exp[b_j(t-c_j)]$, with $c_j$ at all locations seen in the
predictors, $J_V=\{2,6,10,14,20,26,30\}$, or at uniformly-spaced locations,
$J_U=\{2,4,\ldots,30\}$; denote these penalties by $L_V$ and $L_U$, respectively.

\noindent{\bf Simulation results.}  The simulation incorporates two sources of noise: (i)
response random errors, $\epsilon_i \sim N(0,\sigma_{\epsilon}^2)$, added to the $i$th
true response so that $R^2=0.6, 0.8$; (ii) measurement error, $e_i\sim N_p(0,\sigma_{e}^2
I)$, added to the $i$th predictor, $x_i$.  To define a signal-to-noise ratio, $S/N$,  set
$S_i^2:=||x_i-\mu_i||^2/(p-1)$, where $\mu_i$ is the mean value of $x_i$, and set
$S_X^2:=1/n\sum_i S_i^2$.  The $e_i$ are chosen so that $S/N:=S_X/\sigma_e = 2,5,10$.

Figure~\ref{Fig:PartSums} shows a few partial sums of \eqref{eq:betaGSVE} for estimates
arising from three penalties:  $\mathcal{D}^2$, $L=I$ and $L_V$, when $R^2= 0.8$ and
$S/N=2$. Table~\ref{tab:estL} gives a summary of estimation errors. The penalty $L_V$,
exploiting known structure, performs well in terms of estimation error.  Not
surprisingly, a penalty that encourages low-frequency singular vectors, $\mathcal{D}^2$,
is a poor choice although $\mathcal{D}^2+a\,I$ easily improves on $\mathcal{D}^2$ since
the GSVs are more compatible with the relevant structure.   PCR performs well with
estimation errors that can be several times smaller than those of ridge.   The number of
terms used in PCR ranges here from 8 ($S/N=10$) to 25 ($S/N=2$).

\begin{table}[h!bt]
\begin{center}
\caption[]{\small Estimation errors (MSE) for simulation with selected bump
locations. Sample size is $n=50$.} \label{tab:estL}
  \begin{tabular}{lr||rr|rrrrr} \hline
\multicolumn{1}{l}{$R^2$} &
\multicolumn{1}{l||}{$S/N$} &
\multicolumn{1}{r}{$L_V$} &
\multicolumn{1}{r}{$L_{U}$} &
\multicolumn{1}{r}{\small PCR} &
\multicolumn{1}{r}{\small ridge} &
\multicolumn{1}{c}{$\mathcal{D}^2$} &
\multicolumn{1}{r}{$\mathcal{D}^2+a\,I$}  &
\\ \hline %
 0.8 & 10  &   4.00   &  13.81   &  9.38  &  34.39  &  359.83  &  76.31   \\
 0.8 &  5  &   3.72   &  15.46   &  21.50 &  40.02  &  246.17  &  72.64   \\
 0.8 &  2  &   4.40   &  12.96   &  57.89 &  58.22  &  126.75  &  59.35   \\
 0.6 & 10  &   9.60   &  21.60   &  14.10 &  50.50  &  497.70  &  113.60  \\
 0.6 &  5  &  10.22   &  21.65   &  26.02 &  50.68  &  338.70  &  87.58   \\
 0.6 &  2  &  11.75   &  23.18   &  63.50 &  67.94  &  181.75  &  78.45   \\
 \hline
\end{tabular}
\end{center}
\end{table}

Predictably, PCR performance degrades with decreasing $S/N$, a property that is
less pronounced, or not shared, by other estimates.  Performances of $L_V$ and
$L_U$ illustrate properties described in Section~\ref{sec:MSEcompare}.  As
$S/N\to 0$, the ordinary singular vectors of $X$ (on which ridge and PCR rely)
decreasingly represent the structure in $\beta$.  The GS vectors of $(X,L_V)$
and $(X,L_U)$, however, retain structure relevant for representing $\beta$.

\begin{table}[hbt]
\begin{center}
\caption[]{\small Prediction errors for simulation with selected bump
locations. Sample size is $n=50$. Errors are multiplied by 1000 for display.}
\label{tab:predL}
  \begin{tabular}{lr||rr|rrrc} \hline
\multicolumn{1}{l}{$R^2$}&
\multicolumn{1}{l||}{$S/N$}&
\multicolumn{1}{r}{$L_V$}&
\multicolumn{1}{r}{$L_U$}&
\multicolumn{1}{r}{\small PCR} &
\multicolumn{1}{r}{\small ridge} &
\multicolumn{1}{r}{$\mathcal{D}^2$}&
\multicolumn{1}{r}{$\mathcal{D}^2+a\,I$} \\ \hline
 0.8  &  10  &  9.0   &  10.5  &  10.8 & 16.6  &  19.3  &  12.9    \\
 0.8  &  5   &  8.4   &  11.0  &  12.2 & 26.7  &  27.9  &  17.8   \\
 0.8  &  2   &  12.9  &  19.0  &  53.2 & 55.7  &  50.3  &  40.1   \\
 0.6  &  10  &  21.4  &  23.0  &  23.9 & 33.0  &  39.0  &  26.2    \\
 0.6  &  5   &  23.9  &  25.0  &  29.5 & 49.2  &  54.6  &  34.4    \\
 0.6  &  2   &  34.4  &  42.5  &  90.4 & 110.4 &  104.4 &  77.9   \\
\hline
\end{tabular}
\end{center}
\end{table}

Table~\ref{tab:predL} summarizes prediction errors.  When $S/N$ is large,
performance of PCR is comparable with $L_V$ and $L_U$, but degrades for low
$S/N$. Here, even $\mathcal{D}^2+a\,I$ provides smaller prediction errors, in
most cases, than ridge, $\mathcal{D}^2$ or PCR. This illustrates the GS vectors
role in \eqref{eq:GoutisPenEq} and reiterates observations in \cite{Goutis:98}.

\subsection{Raman simulation} \label{sec:RamanSim}
We consider Raman spectroscopy curves which represent a vibrational response of
laser-excited co-organic/inorganic nanoparticles (COINs). Each COIN has a unique
signature spectrum and serves as a sensitive nanotag for immunoassays; see
\cite{Lutz:08,Schachaf:09}.  Each spectrum consists of absorbance values measured at
$p=600$ wavenumbers.  By the Beer-Lambert law, light absorbance increases linearly with a
COIN's concentration and so a spectrum from a mixture of COINs is reasonably modeled by a
linear combination of pure COIN spectra. The data here come from experiments that were
designed to establish the ability of these COINs to measure the existence and abundance
of antigens in single-cell assays.

Let $P_{1},...,P_{10}$ denote spectra from nine pure COINs and one ``blank" (no
biochemical material), each normalized to norm one.
We form in-silico mixtures as follows:
 $ x_i = \sum_{k=1}^{10} c_{i,k}P_{k}$, $i=1,...,n$, with
coefficients $\{c_{i,k}\}$ generated from a uniform distribution. Figure~\ref{Fig:COINs}
shows representative spectra from all nine COINs superimposed on a collection of mixture
spectra, $\{x_i\}_{i=1}^{50}$.  Included in  Figure~\ref{Fig:COINs} is the $\beta$
(dashed curve) used to defined the simulation:  $y_i=\langle x_i,\beta\rangle +
\epsilon$, $\epsilon\thicksim N(0,\sigma^2)$.

\begin{figure}
\centering
{\includegraphics[height=6cm,width=10cm]{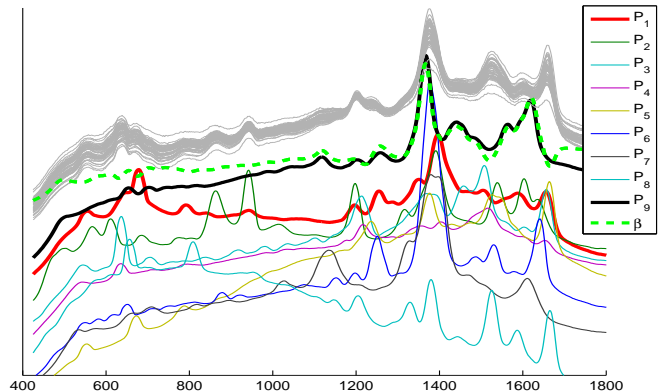}}
\vskip-10pt\caption{\small  Nine pure COIN spectra, $P_1,...,P_9$,
and a coefficient function, $\beta$ (each shifted for display).
$\beta$ arises as a solution to the fLM in which $y$ denotes concentrations of $P_9$
in an in silico mixture of $50$ COIN spectra, $x_i$ (light gray).
This $\beta$ is used in the simulation study of Section~\ref{sec:RamanSim}.}
\label{Fig:COINs}
\end{figure}

 In this simulation, we have created a coefficient function which, instead
of being modeled mathematically, is a curve that exhibits structure of the type found in
Raman spectra. Details on the construction of this $\beta$ are in
Appendix~\ref{App:RamanSimBeta} so here we simply note that it arises as a ridge estimate
from a set of in-silico mixtures of Raman spectra in which one COIN, $P_9$, is varied
prominently relative to the others. See Figure~\ref{Fig:COINs}. Motivation for defining
$\beta$ in this way is based on a view that it seems implausible for us to predict the
structure of realistic signal in these data and recreate it using polynomials, Gaussians
or other analytic functions.

Regardless of its construction, $\beta$ defines signal that allows us to compute
estimation and prediction error. The performances of five methods are summarized in
Table~\ref{tab:RamanSimulation}.  Note that although $\beta$ was constructed as a ridge
estimate (using a different set of in-silico mixtures; see
Appendix~\ref{App:RamanSimBeta}), the ridge penalty is not necessarily optimal for
recovering $\beta$. This is because the strictly empirical eigenvectors associated with
the new spectra may contain structure not informative regarding $y$. Also, in these data,
the performance of FPCR$_R$ is adversely affected by a tendency for the estimate to be
smooth; cf., Figure~\ref{Fig:AOH_BetaHats}. The PEER penalty used here is defined by a
decomposition-based operator \eqref{eq:beta_aLQbPQ} in which $\mathcal{Q}$ is spanned by
a 10-dimensional set of pure-COIN spectra (including a blank). The success of such an
estimate obviously depends on an informed formation of $\mathcal{Q}$, but as long as the
parameter-selection procedure allows for $a=b$, then the set of possible estimates
includes ridge as well as estimates with potentially lower MSE than ridge; see
Proposition~\ref{prop:MSE_Ridge_beta}.

\begin{table}[hbt]
\begin{center}
\caption[]{\small Estimation (MSE) and prediction (PE) errors of several penalization
methods for the simulation described in Figure~\ref{Fig:COINs}. Numbers represent the
average error from 100 runs. PE errors are multiplied by 1000 for display.}
\label{tab:RamanSimulation}
  \begin{tabular}{l||r|rrrr} \toprule
\multicolumn{1}{r}{} &
\multicolumn{1}{r}{$L_Q$} &
\multicolumn{1}{r}{\small PCR} &
\multicolumn{1}{r}{\small ridge} &
\multicolumn{1}{c}{$\mathcal{D}^2+a\,I$} &
\multicolumn{1}{c}{FPCR$_R$} \\
\toprule
 MSE &  8.91  &   12.34  & 13.87  &   41.69  & 15.29 \\
 PE  & 0.0071 &  0.0179  &  0.0139  &    0.0131  & 0.0175\\
\bottomrule
\end{tabular}
\end{center}
\end{table}

We note that this simulation may be viewed as inherently unfair since the PEER estimate
uses knowledge about the relevant structure.  However, this is a point worth
reemphasizing: when prior knowledge about the structure of the data is available, it can
be incorporated naturally into the regression problem.

\subsection{Raman Application} \label{sec:RamanApplic}

We now consider spectra representing true antibody-conjugated COINs from nine laboratory
mixtures.  These mixtures contain various concentrations of eight COINs (of the nine
shown in Figure~\ref{Fig:COINs}).  Spectra from four technical replicates in each mixture
are included to create a set of $n=36$ spectra $\{x_i\}_{i=1}^n$. We designate $P_1$ as
the COIN whose concentration within each mixture defines $y$.   Assuming a linear
relationship between the spectra, $\{x_i\}$, and the $P_1$-concentrations, $\{y_i\}$, we
estimate $P_1$. More precisely, we estimate the structure in $P_1$ that correlates most
with its concentrations, as manifest in this set of mixtures. The fLM is a simplistic
model of this relationship between the concentration of $P_1$ and its functional
structure, but the physics of this technology imply it is a reasonable starting point.

 We present the results of three estimation methods: ridge, FPCR$_R$ and
PEER.  In constructing a PEER penalty, we note that the informative structure in Raman
spectra is not that of low-frequency or other easily modeled features, but it may be
obtainable experimentally.  Therefore, we define $L$ as in \eqref{eq:LQPQ} in which
$\mathcal{Q}$ contains the span of COIN template spectra:
$\mathcal{Q}_1=\span\{P_k\}_{k=1}^{8}$. However, since a single set of templates may not
faithfully represent signal in subsequent experiments (with new measurement errors,
background and baseline noise etc), we enlarge $\mathcal{Q}$ by adding additional
structure related to these templates. For this, set $\mathcal{Q}_2=\span\{P'_k,
P''_k\}_{k=1}^{8}$, where $P'_k$ denotes the derivative of spectrum $P_k$. (Note, to form
$\mathcal{Q}_2$, scale-based approximations to these derivatives are used since raw
differencing of non-smooth spectra introduces noise.) Then set $\mathcal{Q} =
\span\{\mathcal{Q}_1\cup\mathcal{Q}_2\}$ and define
$L=a(I-P_\mathcal{Q})+bP_\mathcal{Q}$.

The regularization parameters in the PEER and ridge estimation processes were chosen
using REML, as described in Section~\ref{sec:TuningParam}.  For the FPCR$_R$ estimate, we
used the \proglang{R}-package \texttt{refund} \cite{ReissRrefund:10} as implemented in
\cite{ReissOgden:07}.

Since $\beta$ is not known (the model $y=X\beta+\epsilon$ is only approximate), we cannot
report MSEs for these three methods.  However, the structure of $P_1$ is qualitatively
known and by experimental design, $y$ is directly associated with $P_1$. The goal here is
that of extracting structure of the constituent spectral components as manifest in a
linear model.  This application is similar to the classic problem of multivariate
calibration \cite{Brown:93,MarxEilersCalib:02} which essentially leads to a regression
model using an experimentally-designed set of spectra from laboratory mixes.

The structure in the estimate here is expected to reflect the structure in $P_1$ that is
correlated with $P_1$'s concentrations, $y$.  The estimate is not, however, expected to
precisely reconstruct $P_1$ since $P_1$ shares structure with the other COIN spectra not
associated with $y$.  See Figure~\ref{Fig:COINs} where $P_1$ is plotted alongside the
other COIN spectra. Now, Figure~\ref{Fig:AOH_BetaHats} shows plots of the PEER, FPCR$_R$
and ridge estimates of the fLM coefficient function.  The PEER estimate,
$\tilde{\beta}_Q$, provides an interpretable compromise between ridge, which involves no
smoothing, and FPCR$_R$, which appears to oversmooth.  For reference, the $P_1$ spectrum
is also plotted along with a mean-adjusted version of $\tilde{\beta}_Q$,
$\tilde{\beta}_Q+\mu$ (dashed line), where $\mu(t) = (1/36)\sum_i x_i(t)$,
$t\in[400,1800]$.

\begin{figure}[h!bt]
\centering
{\includegraphics[height=6cm,width=10cm]{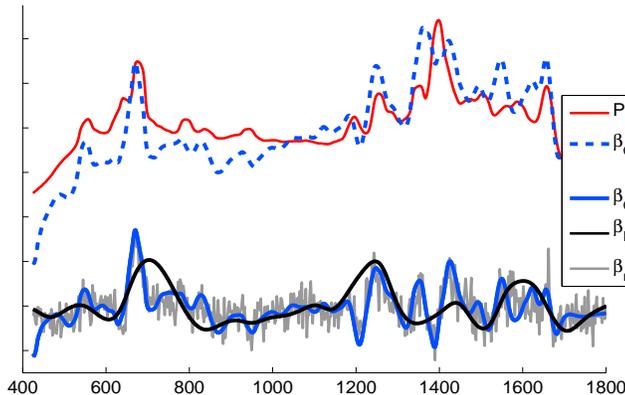}}
\vskip-10pt\caption{\small  Three estimates for a coefficient function that relates concentrations
of $P_1$  to its signal in 8-COIN laboratory mixtures.  Estimates shown: ridge
($\tilde{\beta}_{\mbox{ridge}}$),
FPCR$_R$ ($\tilde{\beta}_F$) and PEER ($\tilde{\beta}_Q$).  For perspective, $P_1$ is plotted (in red) and
the mean-adjusted PEER estimate, $\tilde{\beta}_Q+\mu$ (dashed blue); $\mu$ is the mean
of the mixture spectra $\{x_i\}_{i=1}^{36}$  (not shown).}
\label{Fig:AOH_BetaHats}
\end{figure}

Finally, we consider prediction for these methods by forming a new set of spectra from
different mixture compositions (different concentrations of each COIN) and, additionally,
taken from different batches.  This ``test" set consists of spectra from four technical
replicates in each of 15 mixtures forming a set of $n=60$ spectra,
$\{x_i^{\mbox{\footnotesize test}}\}_{i=1}^n$. As before, $P_1$ is the COIN whose
concentration within each mixture defines the values $\{y_i^{\mbox{\footnotesize
test}}\}_{i=1}^{n}$. For the estimates from each of the three methods (shown in
Figure~\ref{Fig:AOH_BetaHats}) we compute the prediction error:
 $(1/n)\sum_i (y_i^{\mbox{\footnotesize test}}-\langle x_i^{\mbox{\footnotesize
test}},\tilde\beta \rangle)^2$.  The errors for PEER, ridge, and FPCR$_R$
estimates are 0.770,  0.752, 2.139, respectively.  The ridge estimate here
illustrates how low prediction error is not necessarily accompanied by
interpretable structure in the estimate (or low MSE) \cite{CaiHall:06}.

\section{Discussion}
As high-dimensional regression problems become more common, methods that exploit a priori
information are increasingly popular. In this regard, many approaches to penalized
regression are now founded on the idea of ``structured" penalties which impose
constraints based on prior knowledge about the problem's scientific setting. There are
many ways in which such constraints may be imposed, and we have focused on the algebraic
aspects of a penalization process that imposes spatial structure directly into a
regularized estimation.  This approach fits into the classic framework of $L^2$-penalized
regression but with an emphasis on the algebraic role that a penalty operator plays to
impart structure on the estimate.

The interplay between a structured regularization term and the coefficient-function
estimate may not be well understood 
in part because it is not typically viewed in terms of the generalized singular
vectors/values, which is fundamental to this investigation. In particular, any penalized
estimate of the form \eqref{eq:PenEst} with $L\neq I$ is intrinsically based on GSVD
factors in the same way that many common regression methods (such as PCR, ridge,
James-Stein, or partial least squares) are intrinsically based on SVD factors.  Just as
the basics of the ubiquitous SVD are important to understanding these methods, we have
aspired to established the basics of the GSVD as it applies to a this general penalized
regression setting and to illustrate how the theory underlying this approach can be used
inform the choice of penalty operator.

Toward this goal the presentation emphasizes the transparency provided by the
partially-empirical eigenvector expansion \eqref{eq:betaGSVE}.  Properties of the
estimate's variance and bias are determined explicitly by the generalized singular
vectors whose structure is determined by the penalty operator.  We have restricted
attention to additive constraints defined by penalty operators on $L^2$ in order to
retain the direct algebraic connection between the eigenproperties of the operator pair
$(X,L)$ and the spatial structure of $\tilde\beta_{\alpha,L}$. Intuitively, the structure
of the penalty's least-dominant singular vectors should be commensurate with the
informative structure of $\beta$.  The actual effect a penalty has on the properties of
the estimate can be quantified in terms of the GSVD vectors/values.

This perspective differs from popular two-stage signal regression methods in
which estimation is either preceded by fitting the predictors to a set of
(external) basis functions or is followed by a step that smooths the estimate
\cite{CarFerSar:03,HasMal:93,MarxEilers:99,RamSil:05,ReissOgden:07}. Instead,
structure (smoothness or otherwise) is imposed directly into the estimation
process. The implementation of a penalty that incorporates structure less
generic than smoothness (or sparseness) requires some qualitative knowledge
about spatial structure that is informative.  Clearly this is not possible in
all situations, but our presentation has focused on how a functional linear
model may provide a rigorous and analytically tractable way to take advantage
of such knowledge when it exists.

\bigskip
{\bf Acknowledgements:}  The authors wish to thank the referees and associate editor for
all of their questions and helpful suggestions which greatly improved the presentation.
We also gratefully acknowledge the patience, insight and assistance from Dr.~C.~Schachaf
who provided the Raman spectroscopy data for the examples in Sections~\ref{sec:RamanSim}
and \ref{sec:RamanApplic}.  Research support was provided by the National Institutes of
Health grants R01-CA126205, P01-CA053996 and U01-CA086368.

\newpage
\section{Appendix}

\subsection{Defining $\beta$ for the simulation in section~\ref{sec:RamanSim}}\label{App:RamanSimBeta}

This simulation is motivated by an interest in constructing a plausibly realistic $\beta$
whose structure is naturally derived by the scientific setting involving Raman signatures
of nanoparticles.  Although one could model a $\beta$ mathematically using, say,
polynomials or Gaussian bumps (cf., Appendix A.2), such a simulation would be detached
from the physical nature of this problem. Instead, we construct a coefficient function
that genuinely comes from a functional linear model with Raman spectra as predictors.

We first generate in-silico mixtures of COIN spectra as $x^{o}_i = \sum_{k=1}^9 c_{i,k}
P_k$, $i=1,...,50$, where $c_{i,k}\sim \mbox{unif}[0,1]$.
Designating $P_9$ as the COIN of interest, we define response values that correspond to
the ``concentration" of $P_9$ by setting $y^{o}_i:=3\,c_{i,9}$, $i=1,...,n$.  The factor
of 3 imposes a strong association between $P_9$ and the response.

Now, the example in section~\ref{sec:RamanSim} aims to estimate a coefficient function
that truly comes from a solution to a linear model.  However, the equation $y^{o} =
X^{o}\beta$ has infinitely many solutions (where $X^{o}$ is the matrix whose $i$th row is
$x^{o}_i$), so we must we must regularize the problem to obtain a specific $\beta$. For
this, we simply use a ridge penalty and designate the resulting solution to be $\beta$.
This is shown by the dotted curve in Figure~\ref{Fig:COINs} and is qualitatively similar
to $P_9$.

We note that the simulation in section~\ref{sec:RamanSim} uses the same set of COINs, but
a new set of in-silico mixture spectra (i.e., a new set of
$\{c_{i,k}\}\sim\mbox{Unif}[0,1]$).  In addition, a small amount of measurement error was
added, as in section~\ref{sec:BumpsSim}, to each spectrum during the simulation.


\subsection{Frequency domain simulation}\label{App:HallHorSim}

We display results from a study that mimics the scenario of simulations studied by Hall
and Horowitz \cite{HallHor:07}.   We illustrate, in particular, properties of the MSE
discussed following equation~\eqref{eq:MSE_beta_abV} in section~\ref{sec:MSEcompare}
relating to $b=0$.  In fact, we consider the more general scenario in which $\mathcal{Q}$
is not constructed from empirical eigenvectors (as in PCR and ridge), but is defined by a
prespecified envelope of frequencies.

In this simulation both $\beta$ and $x_i$, $i= 1,\ldots,n$, are generated as
sums of the cosine functions
\begin{eqnarray*}
x_i(t)  =  \sum_{j=1}^{40} \gamma_j Z_{ij} \phi_j(t) + e_i(t), \quad
\beta(t)  =  0.75 \phi_5(t) + 1.5 \phi_{11}(t) + 1 \phi_{17}(t), \quad t\in[0,1],
\end{eqnarray*}
where $\gamma_j = (-1)^{j+1} j^{-0.75}$, $Z_{ij}$ is uniformly distributed on
$[-3^{1/2}, 3^{1/2}]$ ($E(Z_{ij})=0$ and $\var(Z_{ij}) = 1$), $\phi_1 \equiv 1$
and $\phi_{j}(t) = 2^{1/2} \cos(j \pi t)$
    for $j \geq 1$, and
$e_i(t) \sim N(0, \sigma_X^2)$, and $\cov(e_i(t), e_{i'}(t'))= 0$ for either $i
\neq i'$ or $t \neq t'$. The response $y_i$ is defined as $y_i = \langle \beta,
x_i\rangle + \epsilon_i$, where $\epsilon_i \sim N(0,\sigma^2), \;
\mbox{i.i.d.}$.  The simulations involve discretizations of these curves
evaluated at $p=100$ equally spaced time points, $t_j$, $j=1, \ldots, p$, that
are common to all curves.

\smallskip \noindent {\bf Penalties. } We consider properties of estimates from
a variety of penalties: ridge ($L=I$), $\mathcal{D}^2$, $\mathcal{D}^2 + a I$,
and PCR\footnote{PCR is not obtained explicitly from a penalty operator, but
see Corollary~\ref{cor:beta_ab_family}.}. In addition, targeted penalties of
the form $L=I-P_\mathcal{Q}$, are defined by the specified subspaces
$\mathcal{Q}=\span\{\phi_j\}_{j\in J}$, for $\phi_j$ defined above.
Specifically, we use $J = J_F = \{j=5,...,17\}$ (a tight envelope of
frequencies) to define $L_F$, and $J = J_G = \{j=4,...,20\}$ (a less focused
span of frequencies) to define $L_G$. The operator $\mathcal{D}^2 + a I$ simply
serves to illustrate the role of higher-frequency singular vectors as discussed
in Section~\ref{sec:GoutisPen}.  In the simulations, the coefficient $a$ in
$\mathcal{D}^2 + a I$ was chosen  simultaneously with $\alpha$ via a
two-dimensional grid search.

\noindent{\bf Simulation results.}  Table~\ref{tab:estF} summarizes estimation
results for all six penalties and two sample sizes, $n=50, 200$. The prediction
results for these estimates are in Table~\ref{tab:predF}. These are reported
for $S/N=10,5$ and $R^2=0.8,0.6$. The number of terms in the PCR estimate was
optimized and ranged from 19 to 25 when $R^2=0.8$ and decreased with decreasing
$R^2$.  Analogously, one could optimize over the dimension of $\mathcal{Q}$ (to
implement a truncated GSVD), but the purpose here is illustrative while in
practice a more robust approach would emply a penalty of the form
\eqref{eq:LQPQ}.

\begin{table}[htbp]
\begin{center}
\caption[]{\small Estimation errors (MSE) for the simulation with localized
frequencies. \\
 } \label{tab:estF}
  \begin{tabular}{rrr||rr|rrrr} \hline
\multicolumn{1}{c}{$n$}&
\multicolumn{1}{c}{$R^2$}&
\multicolumn{1}{r||}{$S/N$}&
\multicolumn{1}{c}{$L_F$}&
\multicolumn{1}{c}{$L_G$}&
\multicolumn{1}{c}{\small PCR} &
\multicolumn{1}{c}{\small ridge} &
\multicolumn{1}{c}{$\mathcal{D}^2$}&
\multicolumn{1}{c}{$\mathcal{D}^2+a\,I$}  \\ \hline
50  & 0.8 & 10  &  42.42 &  77.31 & 123.60 & 132.50 & 1051.20 &  568.45  \\
50  & 0.8 &  5  &  41.55 &  75.41 & 128.75 & 143.48 & 447.64 & 184.07    \\
200 & 0.8 &  10 &  8.28  &  13.48 & 33.44  & 65.78  & 453.54  & 169.41 \\
200 & 0.8 & 5   &  8.56  & 13.08  & 36.36  &  87.59 & 100.15 &  76.24   \\
50  & 0.6 & 10  & 106.89 & 200.08 & 173.56 & 173.94 & 1098.20 & 631.05    \\
50  & 0.6 & 5   & 109.51 & 178.05 & 178.15 & 196.62 & 612.12 & 259.92   \\
200 & 0.6 & 10  &  25.30 &  38.73 &  58.90 &  98.13 & 847.59 &  350.46  \\
200 & 0.6 & 5   &  22.08 &  33.79 &  59.92 & 119.48 & 240.09 & 127.52  \\
\end{tabular}
\end{center}
\end{table}

Errors obtained with ridge and $PCR$ are small, as expected, since the
structure of $\beta$ in this example is consistent with the structure
represented in the singular vectors, $v_k$. Therefore, even though the
relationship between the $y_i$ and $x_i$ degrades (indeed, even as $R^2\to 0$),
these estimates are comprised of vectors that generally capture structure in
$\beta$ since it is strongly represented by the dominant eigenstructure of $X$.
The second-derivative penalty, $\mathcal{D}^2$, produces the worst estimate in
each of the scenarios due to oversmoothing.  Note $\mathcal{D}^2+a\,I$ improves
on $\mathcal{D}^2$, yet it is still not optimal for the range of frequencies in
$\beta$.

Regarding $L_G$, the MSE gets worse as $S/N$ increases.  Indeed, here
$\mathcal{Q}$ is fixed and relatively large and since the $\sigma_k$ decay
faster when $S/N$ is big, this leads to rank deficiency and large variance; see
equation~\eqref{eq:MSE_beta_abV} (note, this only applies approximately since
$\mathcal{Q}$ does not consist of ordinary SVs). In our previous examples, this
is stabilized by a $b>0$.

The problems of estimation and prediction have different properties \cite{CaiHall:06};
good prediction may be obtained even with a poor estimate, as seen in
Table~\ref{tab:predF}. The estimate from $L_{D_a}$ is generally poor relative to others
(as measured by the $L^2$-norm), but its prediction error is comparable with other
methods and is best among the non-targeted penalization methods.  This is consistent with
the outcome described by C.~Goutis \cite{Goutis:98}  where (derivatives of) the predictor
curves contain sharp features and so standard least-squares regularization (OLS, PCR,
ridge, etc.) perform worse than a PEER estimate which imposes a greater emphasis on
``regularly oscillatory but not smooth components"; see section~\ref{sec:GoutisPen}.

\begin{table}[hbtp]
\begin{center}
\caption[]{\small Prediction errors for the simulation with localized frequencies.\\
} \label{tab:predF}
  \begin{tabular}{rrr||rr|rrrc} \hline
  \multicolumn{1}{c}{$n$}&
\multicolumn{1}{c}{$R^2$}&
\multicolumn{1}{r||}{$S/N$}&
\multicolumn{1}{c}{$L_F$}&
\multicolumn{1}{c}{$L_G$}&
\multicolumn{1}{c}{\small PCR} &
\multicolumn{1}{c}{\small ridge} &
\multicolumn{1}{c}{$\mathcal{D}^2$}&
\multicolumn{1}{c}{$\mathcal{D}^2+a\,I$} \\ \hline
50 & 0.8  &  10 &  0.848 &   1.134  &  1.490 &  1.423  &   1.292  &  1.246  \\
50 & 0.8  &   5 &  0.840 &   1.124  &  1.427 &  1.390  &   1.304  &  1.222 \\
200 & 0.8 &  10 &  0.200 &   0.273  &  0.432 &  0.497  &   0.444  &  0.418   \\
200 & 0.8 &   5 &  0.211 &   0.276  &  0.460 &  0.547  &   0.466  &  0.455  \\
50  & 0.6 &  10 &  2.165 &   2.900  &  3.051 &  2.705  &   2.621  &  2.472   \\
50  & 0.6 &   5 &  2.171 &   2.832  &  3.158 &  2.938  &   2.912  &  2.724   \\
200 & 0.6 &  10 &  0.621 &   0.801  &  1.058 &  1.160  &   1.044  &  0.990  \\
200 & 0.6 &   5 &  0.584 &   0.766  &  1.062 &  1.186  &   1.069  &  1.014   \\
\end{tabular}
\end{center}
\end{table}

\newpage

\bibliography{PEERBibliography}

\providecommand{\bysame}{\leavevmode\hbox to3em{\hrulefill}\thinspace}
\providecommand{\MR}{\relax\ifhmode\unskip\space\fi MR }
\providecommand{\MRhref}[2]{%
  \href{http://www.ams.org/mathscinet-getitem?mr=#1}{#2}
}
\providecommand{\href}[2]{#2}
\begin{thebibliography}{10}

\bibitem{Belge:00}
M.~Belge, M.E. Kilmer, and E.L. Miller, \emph{Wavelet domain image restoration
  with adaptive edge-preserving regularization}, IEEE Transactions On Image
  Processing \textbf{9} (2000), no.~4, 597.

\bibitem{BertBocc:98}
M.~Bertero and P.~Boccacci, \emph{Introduction to {I}nverse {P}roblems in
  {I}maging}, Institute of Physics, Bristol, UK, 1998.

\bibitem{BingLarntz:77}
C.~Bingham and K.~Larntz, \emph{A simulation study of alternatives to ordinary
  least squares: Comment}, Journal of the American Statistical Association
  \textbf{72} (1977), no.~357, 97--102.

\bibitem{Bjorck:96}
{\AA}.~Bj{\"o}rck, \emph{Numerical {M}ethods for {L}east {S}quares {P}roblems},
  SIAM, Philadelphia, 1996.

\bibitem{Brown:93}
P.~Brown, \emph{Measurement, {R}egression and {C}alibration}, Oxford University
  Press, Oxford, UK, 1993.

\bibitem{BrumRuppWand:99}
B.A. Brumback, D.~Ruppert, and M.P. Wand, \emph{Comment on ``{V}ariable
  selection and function estimation in additive nonparametric regression using
  a data-based prior''}, Journal of the American Statistical Association
  \textbf{94} (1999), 794--797.

\bibitem{CaiHall:06}
T.T. Cai and P.~Hall, \emph{Prediction in functional linear regression}, The
  Annals of Statistics \textbf{34} (2006), no.~5, 2159--2179.

\bibitem{CarFerSar:03}
H.~Cardot, F.~Ferraty, and P.~Sarda, \emph{Spline estimators for the functional
  linear model}, Statistica Sinica \textbf{13} (2003), 571--591.

\bibitem{CravenWahba:79}
P.~Craven and G.~Wahba, \emph{Smoothing noisy data with spline functions},
  Numerische Mathematik \textbf{31} (1979), 377--403.

\bibitem{DemSchWer:77}
A.P. Dempster, M.~Schatzoff, and N.~Wermuth, \emph{A simulation study of
  alternatives to ordinary least squares}, Journal of the American Statistical
  Association \textbf{72} (1977), no.~357, 77--912.

\bibitem{Elden:82}
L.~Eld\'en, \emph{A weighted pseudoinverse, generalized singular values, and
  contstrained least squares problems}, BIT \textbf{22} (1982), 487--502.

\bibitem{EnglHankNeub:00}
H.W. Engl, M.~Hanke, and A.~Neubauer, \emph{Regularization of inverse
  problems}, Kluwer, Dordrecht, Germany, 2000.

\bibitem{GolubVanLoan:96}
G.H. Golub and C.~Van Loan, \emph{Matrix computations}, Johns Hopkins
  University Press, Baltimore, 1996.

\bibitem{Goutis:98}
C.~Goutis, \emph{Second-derivative functional regression with applications to
  near infra-red spectroscopy}, Journal of the Royal Statistical Society B
  \textbf{60} (1998), no.~1, 103--114.

\bibitem{Groetsch:84}
C.W. Groetsch, \emph{{The Theory of Tikhonov Regularization for Fredholm
  Equations of the First Kind}}, Research Notes in Mathematics, vol. 105,
  Pitman, Boston, MA, 1984.

\bibitem{HallHor:07}
P.~Hall and J.L. Horowitz, \emph{Methodology and convergence rates for
  functional linear regression}, The Annals of Statistics \textbf{35} (2007),
  no.~1, 70--91.

\bibitem{HallPosPre:01}
P.~Hall, D.S. Poskitt, and B.~Presnell, \emph{A functional data-analytic
  approach to signal discrimination}, Technometrics \textbf{43} (2001), no.~1,
  1--9.

\bibitem{Hansen:89}
P.C. Hansen, \emph{Perturbation bounds for discrete {T}ikhonov regularisation},
  Inverse Problems \textbf{5} (1989), {L}41--{L}44.

\bibitem{HansenBook:98}
\bysame, \emph{Rank-{D}eficient and {D}iscrete {I}ll-{P}osed {P}roblems}, SIAM,
  Philadelphia, PA, 1998.

\bibitem{HasBujTib:95}
T.~Hastie, A.~Buja, and R.~Tibshirani, \emph{Penalized discriminant analysis},
  The Annals of Statistics \textbf{23} (1995), no.~1, 73--102.

\bibitem{HasMal:93}
T.~Hastie and C.~Mallows, \emph{Discussion of {``A statistical view of some
  chemometrics regression tools"}}, Technometrics \textbf{35} (1993), 109--148.

\bibitem{HeckRams:00}
N.E. Heckman and J.O. Ramsay, \emph{Penalized regression with model based
  penalties}, Canadian Journal of Statistics \textbf{28} (2000), 241--258.

\bibitem{HoerlKen:70}
A.E. Hoerl and R.W. Kennard, \emph{Ridge regression: Biased estimation for
  nonorthogonal problems}, Technometrics \textbf{12} (1970), no.~1, 55--67.

\bibitem{Huang:08}
J.Z. Huang, H.~Shen, and A.~Buja, \emph{Functional principal components
  analysis via penalized rank one approximation}, Electronic Journal of
  Statistics \textbf{2} (2008), 678--695.

\bibitem{KilHanEsp:07}
M.E. Kilmer, P.C. Hansen, and M.I. Espa{\~n}ol, \emph{A projection-based
  approach to general-form {T}ikhonov regularization}, Siam J. Sci. Comput
  \textbf{29} (2007), no.~1, 315--330.

\bibitem{LiLi:08}
C.~Li and H.~Li, \emph{Network-constrained regularization and variable
  selection for analysis of genomic data}, Bioinformatics \textbf{24} (2008),
  no.~9, 1175.

\bibitem{Lutz:08}
B.R. Lutz, C.E. Dentinger, L.N. Nguyen, L.~Sun, J.~Zhang, A.N. Allen, S.~Chan,
  and B.S. Knudsen, \emph{{Spectral Analysis of Multiplex Raman Probe
  Signatures}}, ACS nano \textbf{2} (2008), no.~11, 2306--2314.

\bibitem{MacLeod:88}
A.J. MacLeod, \emph{Finite-dimensional regularization with nonidentity
  smoothing matrices}, Linear Algebra and its Applications \textbf{111} (1988),
  191--207.

\bibitem{Marq:70}
D.W. Marquardt, \emph{Generalized inverses, ridge regression, biased linear
  estimation and nonlinear estimation}, Technometrics \textbf{12} (1970),
  no.~3, 591--612.

\bibitem{MarxEilers:99}
B.D. Marx and P.H.C. Eilers, \emph{Generalized linear regression on sampled
  signals and curves: A {P}-spline approach}, Technometrics \textbf{41} (1999),
  no.~1, 1--13.

\bibitem{MarxEilersCalib:02}
B.D. Marx and P.H.C. Eilers, \emph{{Multivariate calibration stability: a
  comparison of methods}}, Journal of Chemometrics \textbf{16} (2002), no.~3,
  129--140.

\bibitem{Muller:05}
H.G. {M\"uller}, \emph{Functional modelling and classification of longitudinal
  data}, Scandinavian Journal of Statistics \textbf{32} (2005), 223--240.

\bibitem{Neum:98}
A.~Neumaier, \emph{Solving ill-conditioned and singular linear systems: A
  tutorial on regularization}, SIAM Review \textbf{10} (1998), no.~3, 636--666.

\bibitem{ObrienHolt:81}
D.M. O'Brien and J.N. Holt, \emph{The extension of generalized cross-validation
  to a multi-parameter class of estimators}, Austral. Math. Soc. (Series B)
  (1981), no.~22, 501--514.

\bibitem{PaigSaun:81}
C.C. Paige and M.A. Saunders, \emph{Towards a generalized singular value
  decomposition}, SIAM J. Numerical Analysis \textbf{18} (1981), no.~3,
  398--405.

\bibitem{Phillips:62}
D.L. Phillips, \emph{A technique for the numerical solution of certain integral
  equations of the first kind}, Journal of the {ACM} \textbf{9} (1962), no.~1,
  84--97.

\bibitem{Rproject:11}
{R Development Core Team}, \emph{R: A language and environment for statistical
  computing}, R Foundation for Statistical Computing, Vienna, Austria, 2011,
  {ISBN} 3-900051-07-0.

\bibitem{RamSil:05}
J.O. Ramsay and B.W. Silverman, \emph{Functional {D}ata {A}nalysis},
  Springer-Verlag, New York, 2005.

\bibitem{ReissRrefund:10}
Philip Reiss, Lei Huang, and Jeff Goldsmith, \emph{refund: Regression with
  functional data}, 2010, R package version 0.1-2.

\bibitem{ReissOgden:07}
P.T. Reiss and R.T. Ogden, \emph{Functional principal component regression and
  functional partial least squares}, Journal of the American Statistical
  Association \textbf{102} (2007), no.~479, 984--986.

\bibitem{ReissOgden:09}
\bysame, \emph{Smoothing parameter selection for a class of semiparametric
  linear models}, Journal of the Royal Statistical Society B \textbf{71}
  (2009), no.~2, 505–--523.

\bibitem{RuppertWandCarroll:03}
D.~Ruppert, M.P. Wand, and R.J. Carroll, \emph{Semiparametric regression},
  Cambridge University Press, New York, 2003.

\bibitem{SAS:2008}
{\proglang{SAS} Institute Inc.}, \emph{\proglang{SAS/STAT} software,
  version~9.2}, Cary, NC, 2008.

\bibitem{Schachaf:09}
C.M. Shachaf, S.V. Elchuri, A.L. Koh, J.~Zhu, L.N. Nguyen, D.J. Mitchell,
  J.~Zhang, K.B. Swartz, L.~Sun, S.~Chan, et~al., \emph{{A Novel Method for
  Detection of Phosphorylation in Single Cells by Surface Enhanced Raman
  Scattering (SERS) using Composite Organic-Inorganic Nanoparticles (COINs)}},
  PLoS ONE \textbf{4} (2009), no.~4.

\bibitem{Silverman:96}
B.W. Silverman, \emph{Smoothed functional principal components analysis by
  choice of norm}, The Annals of Statistics \textbf{24} (1996), 1--24.

\bibitem{Slawski:10}
M.~Slawski, W.~Zu~Castell, and G.~Tutz, \emph{Feature selection guided by
  structural information}, Annals of Applied Statistics \textbf{4} (2010),
  no.~2, 1056--1080.

\bibitem{Speed:91}
T.~Speed, \emph{{Comment on ``that BLUP is a good thing: The estimation of
  random effects", by G.K.~Robinson}}, Statistical Science \textbf{6} (1991),
  no.~1, 42--44.

\bibitem{StoutKalivas:06}
F.~Stout and J.H. Kalivas, \emph{Tikhonov regularization in standardized and
  general form for multivariate calibration with application towards removing
  unwanted spectral artifacts}, Journal of Chemometrics \textbf{20} (2006),
  22--33.

\bibitem{TibFusedLasso:05}
R.~Tibshirani, M.~Saunders, S.~Rosset, J.~Zhu, and K.~Knight, \emph{Sparsity
  and smoothness via the fused lasso}, Journal of the Royal Statistical Society
  B \textbf{67} (2005), no.~1, 91–--108.

\bibitem{RTibshTaylor:11}
R.J. Tibshirani and J.~Taylor, \emph{The solution path of the generalized
  lasso}, The Annals of Statistics \textbf{39} (2011), no.~3, 1335--1371.

\bibitem{Tikhonov:43}
A.N. Tikhonov, \emph{On the stability of inverse problems}, Dokl. Akad. Nauk
  {SSSR} \textbf{39} (1943), 176–--179.

\bibitem{Tutz:09}
G.~Tutz and J.~Ulbricht, \emph{Penalized regression with correlation-based
  penalty}, Statistics and Computing \textbf{19} (2009), no.~3, 239--253.

\bibitem{Varah:79}
J.M. Varah, \emph{A practical examination of some numerical methods for linear
  discrete ill-posed problems}, SIAM Review \textbf{21} (1979), no.~1,
  100--111.

\bibitem{WahbaBook:90}
G.~Wahba, \emph{Spline models for observational data}, vol.~59, Society for
  Industrial Mathematics, 1990.

\bibitem{Wood:06}
S.N. Wood, \emph{An introduction to generalized additive models with {R}},
  Chapman and Hall, 2006.

\end{thebibliography}
\bibliographystyle{amsplain}

\end{document}